\documentclass{article}
\usepackage{import}
\usepackage{basic}

\title{Minimizing Inequity in Facility Location Games}
\date{}

\begin{document}

\author{Yuhang Guo\thanks{UNSW Sydney. Email: \texttt{yuhang.guo2@unsw.edu.au}.} \and Houyu Zhou\thanks{UNSW Sydney. Email: \texttt{houyu.zhou97@gmail.com}. Corresponding Author.}}

\maketitle

\begin{abstract}

This paper studies the problem of minimizing group-level inequity in facility location games on the real line, where agents belong to different groups and may act strategically. We explore a fairness-oriented objective that minimizes the maximum group effect introduced by Marsh and Schilling (1994). Each group's effect is defined as its total or maximum distance to the nearest facility, weighted by group-specific factors. We show that this formulation generalizes several prominent optimization objectives, including the classical utilitarian (social cost) and egalitarian (maximum cost) objectives, as well as two group-fair objectives, maximum total and average group cost. In order to minimize the maximum group effect, we first propose two novel mechanisms for the single-facility case, the \BALANCED mechanism and the \MajorPhantom mechanism. Both are strategyproof and achieve tight approximation guarantees under distinct formulations of the maximum group effect objective. Our mechanisms recover the known tight approximation guarantees for classical group-fair objectives as special cases, while extending them to a broader weighted framework and unifying several classical truthful mechanisms.

\end{abstract}

\thispagestyle{empty}

\vspace{15pt}

\hrule

\vspace{5pt}
{
	\setlength\columnsep{35pt}
	\setcounter{tocdepth}{2}
	\renewcommand\contentsname{\vspace{-20pt}}
		{\small\tableofcontents}
}

\vspace{11pt}
\hrule

\newpage
\pagenumbering{arabic}

\section{Introduction}\label{sec:intro}
Facility Location Games (FLGs), which study how to locate facilities based on agents' preferences, have been extensively explored over the past two decades. Most prior work in this area has prioritized efficiency, typically aiming to minimize the total cost incurred by agents in accessing services. Such efficiency-driven approaches achieve optimal social welfare, however, at the expense of fairness and equity. In particular, mechanisms designed purely for efficiency tend to favor majority groups, leaving disadvantaged or minority populations marginalized. 
Recognizing these limitations, recent research has increasingly focused on incorporating fairness into facility location games. These efforts span a spectrum from \emph{individual fairness}, which aims to equalize costs across agents~\citep{cai2016envy,Toby25}, to \emph{group fairness}, which ensures equitable treatment across predefined groups~\citep{MarshS94, ZLC22a, ALL+25a}. A seminal contribution by \citet{MarshS94} introduced various equity metrics, including the ``center'' objective, which seeks to \textbf{minimize the maximum group effect}. As they note,
\begin{quote}
    ``\textit{This is the earliest and most frequently used measure that has an equity component.}"
\end{quote}
This underscores the significance of the center objective in equity-aware location analysis. However, many subsequent studies have adopted a narrow interpretation of this objective, often modeling it as the maximum individual distance across all agents, thereby overlooking group structures and alternative definitions of group-level costs. In contrast, \citet{MarshS94} proposed a broader formulation, in which the effect of a group, $E_i$, could be defined in terms such as the \emph{total distance} incurred by all agents in group $i$. This generalization more accurately reflects the \emph{collective burden} borne by each group, offering a richer fairness perspective.
Motivated by this observation, we revisit the general objective of minimizing $\max_i E_i$ and study its implications within the framework of facility location games.

In this paper, we focus on the objective of minimizing the maximum group effect, where the effect $E_i$ of a group $i$ is defined as either the total or maximum distance from its agents to their nearest facility, multiplied by a weight $w_i$, capturing group-specific priorities, such as socioeconomic status or policy-driven importance (see \Cref{sec:model} for formal definitions). Our objective of minimizing $\max_i E_i$ emphasizes group-level fairness by bounding the worst-case burden among all groups. Unlike traditional formulations that protect only the most distant individual, our model accounts for the collective experience of each group. This perspective aligns with Rawlsian principles~\citep{Rawls1958}, which advocate prioritizing the welfare of the most disadvantaged. We aim to design new strategyproof mechanisms to this group-centric fairness objective.

\subsection{Related Work}
FLGs have received significant attention in the literature over the past decades, particularly following the influential work of~\citet{ProcacciaT09}. For an overview of the diverse models, we refer readers to the comprehensive survey by \citet{CFL+21a}. In the remainder of this section, we focus specifically on research that investigates fairness notions within the context of facility location games.

There is a rich body of work studying fairness considerations in facility location problems from the optimization perspective. Early studies in the \textit{operations research} community explored various equity-based fairness measures, including the standard deviation of distances~\citep{McAl76a} and the Gini coefficient~\citep{MarshS94}. In the context of algorithmic mechanism design, the pioneering work of~\citet{ProcacciaT09} introduced the notion of individual fairness through the maximum cost objective, i.e., minimizing the maximum individual distance from any agent to the facility, and proposed strategyproof mechanisms that approximately optimize this objective. Building on this foundation, later research proposed alternative formulations of individual fairness. \citet{cai2016envy} studied the minimax envy objective, which captures fairness via the maximum difference in distances between any pair of agents; this framework was subsequently extended to the two-facility setting by \citet{CFL+22a}. \citet{DLC+20a} introduced the envy ratio objective, adapted from the fair division problem, which measures the ratio of the best-off agent's utility to that of the worst-off agent. This concept was further extended to the multi-facility case by \citet{liu2020envyratio}. \citet{Toby25} studied the Gini index objective and proposed strategyproof mechanisms.

Fairness notions have also been extended from individuals to groups of agents. \citet{ZLC22a, LiLC24a} investigated two group-based fairness objectives: the \textit{maximum total cost} ($\mtgc$) and \textit{maximum average cost} ($\magc$), each capturing the worst-case burden across predefined groups of agents. \citet{ZhouCL24} studied these group-fair objectives in a setting with altruistic agents. \citet{AzizCSWX25} extended these objectives to a multi-period mobile facility location setting, where agents from different groups arrive over time and multiple facilities are located in each period; in the single-facility, single-period case, they obtained tight $2$-approximation bounds for both objectives, improving the previous approximation guarantee of $3$~\citep{ZLC22a}.
\citet{LiPWZ25} further incorporated group externalities into this framework, considering both competitive and collaborative interactions among group members. For the competitive setting, they obtained tight $2$-approximation bounds for both
$\mtgc$ and $\magc$.
\citet{ALL+25a} introduced a model of proportional fairness, in which fairness guarantees are provided to endogenously defined groups of agents, and the strength of the guarantee scales proportionally with group size. This concept was further extended by \citet{LAL+24a} to the setting of obnoxious facility location, where the facility imposes disutility rather than providing benefit.


\paragraph{Roadmap} Section~2 introduces the group-based facility location game model, formally defines our key objective, maximum group effect, and provides an overview of our main contributions. Section~3 focuses on the single-facility setting, where we propose two novel mechanisms tailored to distinct formulations of the maximum group effect objective. In Section~4, we extend our analysis to the multi-facility setting and revisit the classical \ENDPOINT mechanism within the framework. Due to space constraints, some proofs are omitted.

\section{Model and Contributions}\label{sec:model}
\subsection{Facility Location Games}
For any $t\in \mathbb{N}$, let $[t]:=\{1,2,\dots, t\}$. 
A facility location game consists of a set $N=[n]$ of $n$ agents, belonging to $m$ groups. For each agent $i \in N$, her type is denoted as $\theta_i=(x_i, g_i)$ where $x_i \in \mathbb{R}$ is the agent's \textit{private} location on a line, and $g_i \subseteq [m]$ denotes the set of their \textit{public} group memberships. The type profile of all agents is denoted by $\bftheta=(\theta_1,\theta_2,\dots, \theta_n)$.
Without loss of generality, we assume that agents are indexed by $x_1 \leq x_2 \leq \dots \leq x_n$. Let $G$ be the set of groups, $\mathcal{G}=\myset{G_1,G_2,\dots, G_m}$ where $G_j=\myset{i\in N : j \in g_i}$ represents the set of agents belonging to group $j$. Denote by $|G_j|$ the cardinality of $G_j$.  For each group $j \in [m]$, $j$ is assigned with a weight $w_j \geq 0$, reflecting their priority, such as socioeconomic factors or policy-driven importance. To simplify notation, let $w_{\max}=\max_{j \in [m]} w_j$ and $w_{\min}=\min_{j \in [m]} w_j$ denote the maximum and minimum group weights, respectively, and let $w_{g_i}=\max_{g\in g_i}\{w_{g}\}$ denote the maximum weight among all the groups to which agent $i$ belongs.
A deterministic mechanisms $f:\Theta^n \to \mathbb{R}^k$ maps the type profile $\bftheta$ to locations of $k$ facilities on a real line. Given any mechanism $f$, for each agent $i$, the cost incurred by $i$ is defined as $c(f(\bftheta), x_i) = \min_{y \in f(\bftheta)} |y - x_i|$, i.e., the distance from $x_i$ to the nearest facility. 

In this paper, we primarily focus on designing \textit{strategyproof} mechanisms. A mechanism $f$ is said to satisfy \textbf{strategyproofness} (SP) if it is in the best interests of every agent $i$ to report their truthful location $x_i$, irrespectively of the reports of the other agents. 
\begin{definition}[Strategyproofness (SP)]
A mechanism $f$ is strategyproof if, for any agent $i$ with true location $x_i$ and group $g_i$, any misreported location $x_i' \in \mathbb{R}$, and any profile $\bftheta'_{-i}$ of other agents' reports, we have:
\begin{align*}
c(f((x_i, g_i), \bftheta'_{-i}), x_i) \leq c(f((x_i', g_i), \bftheta'_{-i}), x_i).
\end{align*}
\end{definition}

\subsection{Maximum Group Effect}\label{subsec:mge}
Given the requirement of strategyproofness, our goal is to design mechanisms that minimize inequity, as captured by the maximum group effect objective, aligning with the central notion proposed by \citet{MarshS94}. 
Formally, for any profile $\bftheta$, the objective is to minimize the \textbf{maximum group effect} ($\mge$), defined as 
\begin{align*}
    \mge(\bftheta, f(\bftheta))=\max_{j\in [m]} E_j.
\end{align*}
We further interpret the \textit{maximum group effect} $E_j$ for each group $j$ in both \textit{utilitarian} and \textit{egalitarian} manners, incorporating the group-specific weight $w_j$ in both formulations. Specifically, one is termed  \textbf{Weighted Total Group Cost} ($\WTGC$), 
$E_j = w_j \cdot \sum_{i \in G_j} c(f(\bftheta), x_i)$, corresponding to the weighted sum of distances from all agents in group $G_j$ to their assigned facilities. The other is termed \textbf{Weighted Maximum Group Cost} ($\WMGC$), i.e., $ E_j = w_j \cdot \max_{i \in G_j} c(f(\bftheta), x_i)$, representing the weighted maximum distance among agents in group $G_j$. $\mge$ prioritizes fairness by limiting the worst-case weighted burden among groups, aligning with principles of equitable resource allocation.

For any strategyproof mechanism, we evaluate the performance by the \textbf{approximation ratio}, defined as the worst-case ratio (over all possible instances) between the maximum group effect produced by the mechanism and the optimal solution.
\begin{definition}[Approximation Ratio]
For any mechanism $f$, the approximation ratio is:
\begin{align*}
\rho = \sup_{\bftheta \in \Theta^n} \frac{\mge(\bftheta, f(\bftheta))}{\mge(\bftheta, \OPT(\bftheta))},
\end{align*}
where $\OPT(\bftheta)$ is the optimal facility placement that minimizes $\mge$ objective under the profile $\bftheta$.
\end{definition}

In the following sections, we focus on deterministic strategyproof mechanisms that optimize the $\mge$ objective. We begin with the single-facility setting ($k = 1$) and then extend our analysis to the multi-facility scenario.

\subsection{Our Contribution}\label{subsec:contribution}
We advance the field of fair mechanism design in FLGs by introducing a unified framework that seamlessly integrates efficiency, individual fairness, and group fairness. 

\paragraph{Capturing Fairness Through Generalized Metrics.}
We first introduce the general metric termed maximum group effect $(\mge)$, which is defined as $\mge = \max_{j \in [m]} E_j$. Here $E_j$ is interpreted either the \textit{weighted total group cost} ($\WTGC$), i.e., $E_j=w_j \cdot \sum_{i \in G_j} c(f(\bftheta), x_i)$, or the \textit{weighted maximum group cost} ($\WMGC$), i.e., $E_j=w_j \cdot \max_{i \in G_j} c(f(\bftheta), x_i)$. Our proposed $\mge$ objective offers a unified framework for analyzing efficiency and fairness in facility location problems, which captures a broad range of objectives by appropriately adjusting the group partitioning and weight assignments, as illustrated in \Cref{fig:mge_generalization_metric}.

\begin{figure}[!htbp]
    \centering
    \begin{tikzpicture}[
        block/.style={draw, thick, minimum width=2cm, minimum height=0.8cm, align=center},
        >=Stealth 
    ]
        \node[block] (O) {$\mge$};

        \node[block, below left=1cm and 1.5cm of O] (A) {\footnotesize \WTGC};
        \node[block, below right=1cm and 1.5cm of O] (F) {\footnotesize \WMGC};

        \node[block, below=2cm of A] (C) {$\magc$}; %
        \node[block, left=0.8cm of C] (B) {$\mtgc$};
        \node[block, right=0.8cm of C] (D) {Social Cost};
        \node[block, right=0.8cm of D] (E) {Max Cost};

        \draw[->] (O) -- (A);
        \draw[->] (O) -- (F);

        \draw[->] (A) -- (B) node[midway, left, xshift=-2pt] {\footnotesize $w_j=1$};
        \draw[->] (A) -- (C) node[near end, left] {\footnotesize $w_j=\frac{1}{|G_j|}$};
        \draw[->] (A) -- (D) node[near end, right] {\footnotesize $m=1$};
        \draw[->] (A) -- (E) node[midway, right, xshift=2pt] {\footnotesize $m=n$};  

        \draw[->] (F) -- (E) node[midway, right] {\footnotesize $m=1$}; 
    \end{tikzpicture}
    \caption{$\mge$: A Systematic Generalization of Objective Metrics.}
    \label{fig:mge_generalization_metric}
\end{figure}
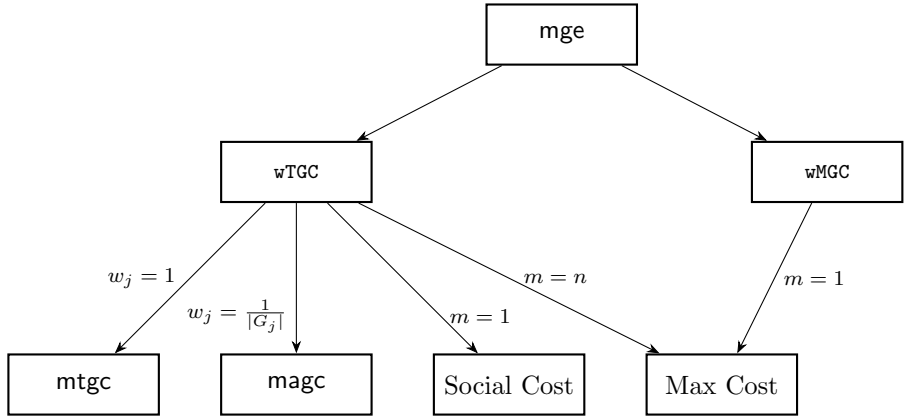

\paragraph{Unified Mechanisms with Tight Approximation Guarantee.}
We propose two novel strategyproof mechanisms: the \BALANCED mechanism and the \MajorPhantom mechanism. The \BALANCED mechanism, a flagship contribution for single-facility location games, not only unifies classic facility location mechanisms, but also provides tight results for group-fairness objectives. Specifically, regarding the social cost objective, the \BALANCED mechanism aligns with the median-point mechanism, while for  the maximum cost objective, it degenerates to the \LEFTMOST mechanism. 
In the context of group fairness, the \BALANCED mechanism specializes to a $2$-approximation for both the maximum total group cost ($\mtgc$) and maximum
average group cost ($\magc$), matching the known tight bounds. More generally, we prove that \BALANCED achieves a tight $2$-approximation for arbitrary
weighted total group costs, allowing overlapping group memberships.


Consequently, this unification establishes the \BALANCED mechanism as a versatile instrument capable of adapting to diverse fairness and efficiency goals without bespoke designs for objectives. To optimize the weighted maximum group cost objective, we introduce the \MajorPhantom mechanism and show it provides tight results for this objective. 

In the two-facility setting (see Section~\ref{sec:multi}), we revisit the \ENDPOINT mechanism~\citep{ProcacciaT09}, which places facilities at the leftmost and rightmost agent locations. For both $\WTGC$ and $\WMGC$, we show it achieves tight bounds. For settings beyond two facilities ($k > 2$), we leverage results from \citet{FotakisT14} to show that all strategyproof, anonymous, and deterministic mechanisms yield unbounded approximation ratios. Our established tight approximation ratios and matching lower bounds are comprehensively summarized in~\Cref{tab:summary_of_results}. 
\begin{table}[htb]
\centering
\resizebox{.6\columnwidth}{!}{
\begin{NiceTabular}{@{}cccc@{}}
\toprule
\textbf{Setting} & \textbf{Objectives} & \textbf{Mechanisms} & \textbf{Bounds} \\
\midrule
$k=1$ & $\WTGC$ & \cellcolor{gray!30}\BALANCED & \cellcolor{gray!30}$2$ \\
 & $\WMGC$ & \cellcolor{gray!30}\MajorPhantom & \cellcolor{gray!30}$2$ \\
$k=2$ & $\WTGC$ & \ENDPOINT & \cellcolor{gray!30}$1+(n-2)\frac{w_{\max}}{w_{\min}}$\\
 & $\WMGC$ & \ENDPOINT & \cellcolor{gray!30}$1+\frac{w_{\max}}{w_{\min}}$ \\
$k \geq 3$ & General & / & $\infty$ \\
\bottomrule
\end{NiceTabular}
}
\caption{Summary of results. All listed bounds are tight. Gray shading denotes the contributions of this work.}
\label{tab:summary_of_results}
\end{table}

\section{Single-Facility Mechanism Design}\label{sec:single}
We begin by considering single-facility setting ($k=1$). For the $\WTGC$ group effect metric, i.e.,  $E_j=w_j \cdot \sum_{i \in G_j} c(f(\bftheta), x_i)$, we propose the \BALANCED mechanism which places the facility at the location under which the maximum weighted values are balanced. Regarding the $\WMGC$ objective where $E_j=w_j \cdot \max_{i \in G_j} c(f(\bftheta), x_i)$, we introduce the \MajorPhantom mechanism, which places the facility at the median-point of the locations of agents in the group $G_{\max}$ with the largest weight, together with $(|G_{\max}|-1)$ constant phantom points. For both cases, we provide tightness results, showing the optimality of the proposed mechanisms.

\subsection{Weighted Total Group Cost}
We first look at the weighted total group cost ($\WTGC$) objective, where $E_j = w_j \cdot \sum_{i \in G_j} c(f(\bftheta), x_i)$. As illustrated in \Cref{fig:mge_generalization_metric}, the $\WTGC$ metric subsumes several well-studied objectives, including the social cost, maximum cost, and group-fairness objectives $\mtgc$ and $\magc$. To address this generalized setting, we introduce our first \BALANCED mechanism (Mechanism~\ref{algo:balanced}), which places the facility at a location that equilibrates the weighted number of agents on either side of it.

\begin{algorithm}[!htbp]
\caption{\BALANCED \textbf{Mechanism}}
\label{algo:balanced}
\begin{algorithmic}[1] 
\REQUIRE Agent profile $\bftheta$, group weights $\{w_j\}_{j\in [m]}$.
\STATE Define $L_j(y) \leftarrow \lvert \{ i\in N:x_i \leq y \text{ and } j \in g_i \} \rvert$ and $R_j(y) \leftarrow \lvert \{ i \in N: x_i > y \text{ and } j \in g_i \} \rvert$. 
\STATE Compute\begin{align*}
 f(\bftheta)\hspace{-.2em}\leftarrow \hspace{-.2em}\min\hspace{-.2em}
       \left\{\hspace{-.2em}y \hspace{-.2em}\in \mathbb{R} \hspace{-.2em}:\hspace{-.2em} \max_{j\in [m]}\! w_jL_j(y)\hspace{-.2em}\geq\hspace{-.2em} \max_{j'\in [m]}\!w_{j'}R_{j'}(y) \hspace{-.2em}\right\}\hspace{-.2em}.
    \end{align*}
\ENSURE Facility location $f(\bftheta)$.
\end{algorithmic}
\end{algorithm}
Intuitively, the \BALANCED mechanism defines, for each group $j \in [m]$ and location $y \in \mathbb{R}$, two functions $L_j(y)$ and $R_j(y)$, representing respectively the number of agents in group $j$ located at or to the left of $y$, and those located to its right. Since $w_jL_j(y)$ is non-decreasing and $w_jR_j(y)$ is non-increasing over $y \in [x_1, x_n]$, the mechanism places the facility at a location that most closely balances the two quantities $\max_{j \in [m]} w_j L_j(y)$ and $\max_{j \in [m]} w_j R_j(y)$. The \BALANCED mechanism can be implemented in $O((n+m) \log n)$ time by performing a binary search over the sorted agent locations to identify the smallest $x_i$ satisfying $\max_{j \in [m]} w_j L_j(x_i) \ge \max_{j' \in [m]} w_{j'} R_{j'}(x_i)$. Each iteration of the binary search requires evaluating these maximum functions, which takes $O(n+m)$ time.

\begin{proposition}
\label{prop:balance_mechanism_mid_point_and_left_most}
The \BALANCED mechanism coincides with the median-point mechanism (resp. leftmost mechanism) when the $\mge$ objective degenerates to the social cost (resp. maximum cost) objective.
\end{proposition}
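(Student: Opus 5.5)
The plan is to specialize the defining rule of the \BALANCED mechanism in each of the two degenerate settings of \Cref{fig:mge_generalization_metric}. In each case I will compute $\max_j w_jL_j(y)$ and $\max_{j'} w_{j'}R_{j'}(y)$ explicitly, then identify the smallest $y$ at which the first is at least the second. Since both sides are step functions that only change at agent locations, the minimum is attained at some agent location $x_i$. So it suffices to determine the smallest index $i$ for which the inequality holds at $y=x_i$. I will also check that the inequality fails for every $y<x_1$ (there the left side is $0$ while the right side is positive), so the minimum is well defined and lies in $[x_1,x_n]$.

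\textbf{Social cost ($m=1$).} All agents belong to a single group with weight $w_1>0$; if $w_1=0$ every location is optimal and the statement is vacuous. The condition then reads $w_1L_1(y)\ge w_1R_1(y)$, i.e., $|\{i:x_i\le y\}|\ge |\{i:x_i>y\}|$, equivalently $|\{i:x_i\le y\}|\ge n/2$. The smallest such $y$ is $x_{\lceil n/2\rceil}$, which is the (left) median agent. This is exactly the median-point mechanism, with ties for even $n$ broken to the left.

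\textbf{Maximum cost ($m=n$).} Each agent forms a singleton group with a common weight $w$, which I normalize to $1$ without loss of generality. If weights were unequal the objective would not be the max cost, so this normalization is part of how the degeneration in \Cref{fig:mge_generalization_metric} is read. Then $\max_j L_j(y)=\mathbb{1}[x_1\le y]$ and $\max_j R_j(y)=\mathbb{1}[x_n>y]$. At $y=x_1$ the left side equals $1$ and the right side is at most $1$, so the condition holds. For $y<x_1$ the left side is $0$ while the right side is $1$, so the condition fails. Hence $f(\bftheta)=x_1$, which is the \LEFTMOST mechanism. Co-located agents cause no problem, because the indicators depend only on whether $x_1\le y$.

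The only delicate point is interpretive rather than technical. I must fix precisely what ``degenerates'' means: one group containing all agents for the social cost, and $n$ equally weighted singleton groups for the max cost. I must also fix the tie-breaking convention, namely that the median-point mechanism takes the left median. Once these are settled, both cases reduce to the short counting arguments above.
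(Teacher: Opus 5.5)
Your proposal is correct and follows essentially the same route as the paper: specialize the defining inequality to $m=1$, where $L_1(y)\ge R_1(y)$ selects the left median $x_{\lceil n/2\rceil}$, and to $n$ equally weighted singleton groups, where the comparison is first satisfied at $x_1$. Your version is a bit more careful than the paper's, since it explicitly rules out $y<x_1$ and states the left-median tie-breaking convention.
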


We next present the main theorem for the \BALANCED mechanism, which satisfies strategyproofness and achieves a $2$-approximation for minimizing the $\mge$ objective when $E_j=w_j\cdot \sum_{i\in G_j}c(f(\bftheta),x_i)$ for all $j\in [m]$.
\begin{theorem}
    \label{thm:balanced_main_theorem}
    The \BALANCED mechanism is strategyproof and has an approximation ratio of $2$ for minimizing $\mge$ when $E_j = w_j \cdot \sum_{i \in G_j} c(f(\bftheta), x_i)$.
\end{theorem}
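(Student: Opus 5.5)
Let $y^*=f(\bftheta)$ and write $\Phi_L(y)=\max_j w_jL_j(y)$ and $\Phi_R(y)=\max_j w_jR_j(y)$. I will first record the structural facts. $\Phi_L$ is a non-decreasing right-continuous step function and $\Phi_R$ is non-increasing, and both change only at agent locations. Hence $y^*$ is attained, $y^*=x_\ell$ for some agent $\ell$, $\Phi_L(y^*)\ge\Phi_R(y^*)$, and $\Phi_L(y)<\Phi_R(y)$ for every $y<y^*$. In particular, just left of $y^*$ (at any $y\in[x_{\ell'},x_\ell)$ with $x_{\ell'}$ the previous distinct location) we have $\Phi_L(y)<\Phi_R(y)$. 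Here $\Phi_R(y)$ counts the agents at $x_\ell$ and to the right of it.

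\emph{Strategyproofness.} Fix agent $i$ and the reports of the others. Suppose first that $x_i<y^*$. Moving the report of $i$ to any $x_i'\le y^*$ changes none of the counts $L_j(y),R_j(y)$ for $y\ge y^*$. It can only move agent $i$ from left to right for thresholds $y\in[x_i,x_i')$, which only lowers $\Phi_L$ and raises $\Phi_R$ there. So the inequality still fails below $y^*$ and still holds at $y^*$, and the output is unchanged. Reporting $x_i'>y^*$ moves $i$ from $L$ to $R$ for thresholds in $[x_i,x_i')$, including $y^*$. This weakly decreases $\Phi_L-\Phi_R$ at every $y$, so the new output is $\ge y^*$, which is no better for $i$. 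The case $x_i>y^*$ is symmetric: any misreport only weakly increases $\Phi_L-\Phi_R$ on thresholds below $x_i$, so the output can only move left, or stay put. The case $x_i=y^*$ is trivial. This monotonicity argument, that the output is non-decreasing in every report and unaffected by reports that stay on the same side, is the whole SP proof.

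\emph{Approximation.} Let $o$ be an optimal location, and assume WLOG $o<y^*$ (the case $o>y^*$ is symmetric). Let $d=y^*-o$. For each group $j$, the triangle inequality gives $\sum_{i\in G_j}|x_i-y^*|\le \sum_{i\in G_j}|x_i-o|+d\cdot|\{i\in G_j: x_i\le y^*\}|$. In fact, agents to the right of $y^*$ strictly gain when moving from $o$ to $y^*$, so one can write
\[
w_j\,\mathrm{cost}_j(y^*)\le w_j\,\mathrm{cost}_j(o)+d\,\bigl(w_jL_j(y^*)-w_jR_j(y^*)\bigr)\quad\text{(more carefully, using only agents in }(o,y^*]).
\]
It therefore suffices to show $d\cdot w_jL_j^{(o,y^*]}\le \mathrm{OPT}$. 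The key step is the defining property of $y^*$ at a threshold just left of $y^*$: there is a group $j'$ with $w_{j'}R_{j'}(y^-)=\Phi_R(y^-)>\Phi_L(y^-)\ge w_jL_j(y^-)$. The group $j'$ has at least $R_{j'}(y^-)$ agents at $x\ge y^*$, each at distance $\ge d$ from $o$, so $\mathrm{OPT}\ge w_{j'}R_{j'}(y^-)\,d$. I must compare this with $w_j$ times the number of $G_j$-agents in $(o,y^*]$, which is at most $L_j(y^-)$ plus the $G_j$-agents located exactly at $y^*$. That extra group at the point $y^*$ is the main obstacle.

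I plan to handle it by pairing agents. Every $G_j$-agent in $(o,y^*)$ contributes at most $d-(x_i-o)+(x_i-o)$, a cancellation that turns the bound into $w_jL_j(y^-)\,d$, which is $\le\mathrm{OPT}$. Agents exactly at $y^*$ have cost $0$ at $y^*$ and contribute nothing extra. Since costs at $y^*$ for agents in $(o,y^*]$ are at most $d$, I get $E_j(y^*)\le E_j(o)+w_jL_j(y^-)d\le 2\,\mathrm{OPT}$. Tightness will come from a two-agent, single-group instance (the median/leftmost mechanism with $x_1=0$ and $x_2=1$ under maximum cost, via \Cref{prop:balance_mechanism_mid_point_and_left_most}), which gives ratio $2$, together with the known lower bound of $2$ for $\mtgc$.
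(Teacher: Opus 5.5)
Your approximation argument is correct in its final form, but the strategyproofness part has a hole in exactly the case where the leftmost-minimum rule matters. Take $x_i<y^*$ and a misreport $x_i'<x_i$. Your sentence that the misreport ``can only move agent $i$ from left to right for thresholds $y\in[x_i,x_i')$'' is false here. That interval is empty, and what actually happens is that $i$ switches from the right side to the left side at every threshold $y\in[x_i',x_i)$. This raises $\Phi_L$ and lowers $\Phi_R$ strictly below $y^*$. That is precisely what could a priori create a threshold $y<y^*$ with $\Phi_L(y)\ge\Phi_R(y)$ and pull the facility toward $x_i$. Monotonicity of the output in each report only tells you the new output is at most $y^*$, not that it equals $y^*$.

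The missing step is a domination argument. For $y\in[x_i',x_i)$ and $j\in g_i$, the new counts satisfy $L_j'(y)=L_j(y)+1\le L_j(x_i)$ and $R_j'(y)=R_j(y)-1\ge R_j(x_i)$, because agent $i$ itself lies in $(y,x_i]$. For $j\notin g_i$ the same two inequalities hold by monotonicity of the counts. Hence $\Phi_L'(y)\le\Phi_L(x_i)<\Phi_R(x_i)\le\Phi_R'(y)$, using $x_i<y^*$, so the output stays at $y^*$. The paper's argument has the same omission: it only observes that the counts at $f(\bftheta)$ are unchanged. So this step needs to be written out. Your case $x_i>y^*$ is fine as stated.

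For the ratio, your route differs from the paper's in execution rather than substance. The paper relocates every agent lying strictly between the output and the optimum onto the optimum, so that each $E_j$ is linear on an agent-free interval, and reads the change in $E_j$ off the slope. You skip the relocation and bound each agent directly. With $d=y^*-o$, agents with $x_i<y^*$ change by at most $+d$ and agents with $x_i\ge y^*$ change by exactly $-d$. This gives $E_j(y^*)\le E_j(o)+w_jL_j(y^-)\,d$, and then $w_jL_j(y^-)\le\Phi_L(y^-)<\Phi_R(y^-)\le\mathrm{OPT}/d$. Your route spares you from checking that the balancing inequality and the lower bound on $\mathrm{OPT}$ survive the relocation. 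Your use of the strict inequality at a threshold just left of the output, with agents sitting at the output counted on the right, is exactly what this side requires. The paper's Case 2, which is your case $o<y^*$, gets the sign of the difference of group effects reversed and as a result never invokes the balancing condition. Your handling of this side is in fact the argument that case needs.

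Three small corrections:
\begin{itemize}
\item The intermediate reduction to $d\cdot w_jL_j^{(o,y^*]}\le\mathrm{OPT}$ is wrong as stated, since agents at or left of $o$ also contribute $+d$ each. Your final inequality with $L_j(y^-)$ already counts them, so no pairing is needed.
\item The case $o>y^*$ is not a literal mirror image because of the tie-breaking. The analogous bound does go through using $\Phi_R(y^*)\le\Phi_L(y^*)$, with agents at $y^*$ counted on the left.
\item Your tightness instance must put the two agents in two singleton groups of equal weight (maximum cost). With a single group the mechanism is the median and is optimal.
\end{itemize}
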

\begin{proof}
    \textbf{Strategyproofness}. Let $f(\bftheta)$ denote the outcome of the \BALANCED mechanism, and consider any agent $i$ with truthful location $x_i$. We prove by discussing the relative positions of $f(\bftheta)$ and $x_i$. Clearly, if $x_i$ coincides with $f(\bftheta)$, agent $i$ has no incentive to misreport her location.
    Case (1). If $x_i < f(\bftheta)$, misreporting $x_i' < f(\bftheta)$ will not change the facility location as $L_j(f(\bftheta))$ and $R_j(f(\bftheta))$ don't change for all $j\in [m]$. If agent $i$ misreports to $x_i' \geq f(\bftheta)$, we have that $L_j(f(\bftheta))$ decreases and $R_{j}(f(\bftheta))$ increases for each $j\in g_i$, potentially shifting the facility location rightward as $\max_{j' \in [m]} \{ w_{j'}R_{j'}(f(\bftheta)) \}$ increases while $\max_{j \in [m]} w_j\{ L_{j}(f(\bftheta)) \}$ decreases. Consequently, agent $i$’s cost increases as the facility moves farther from $x_i$, implying that misreporting cannot be beneficial. We next consider Case (2). If $x_i > f(\bftheta)$, similarly, when misreporting $x_i' > f(\bftheta)$, the facility location remains at $f(\bftheta)$. When $x_i' \leq f(\bftheta)$, by an analogical induction, it could only potentially pushing the facility location farther away from agent $i$'s location $x_i$. Hence, we conclude that for any agent $i \in N$, $i$ has no incentive to misreport her location, which implies the \BALANCED mechanism is strategyproof.

    \textbf{Approximation Ratio.} Denote by $f(\bftheta)$ the \BALANCED mechanism outcome and $y^*=\argmin_{y\in \mathbb{R}}\mge(\bftheta,y)$ the optimal location for profile $\bftheta$. 
    We begin with a key observation that underpins the proof of the approximation ratio. Given any profile $\bftheta$, we construct a modified profile $\bftheta'$ by relocating all agents whose positions lie between $f(\bftheta)$ and $y^*$ to the point $y^*$. Under the construction, we first observe that for each agent $i$ who lies between $f(\bftheta)$ and $y^*$, $c(f(\bftheta), x_i)$ increases while $c(y^*,x_i)$ decreases, which follows that $\mge(\bftheta',f(\bftheta)) \geq \mge(\bftheta, f(\bftheta))$ and $\mge(\bftheta',y^*) \leq \mge(\bftheta, y^*)$. Let $\rho(\bftheta)$ (resp. $\rho(\bftheta')$) denote the approximation ratio under $\bftheta$ (resp. $\bftheta'$). By construction, we have $\rho(\bftheta) \leq \rho(\bftheta')$. Henceforth, we focus exclusively on profiles involving such movements. For the sake of clarity, we will abuse the notation $\bftheta$ to refer to the modified profile. 

    \textbf{Case 1: $f(\bftheta)\leq y^*$.} For each group $G_j$, the group effects under $f(\bftheta)$ and $y^*$ are expressed as $E_j(f(\bftheta))=w_j\sum_{i\in G_j}|f(\bftheta)-x_i|$ and $E_j(y^*)=w_j\sum_{i\in G_j}|y^*-x_i|$. Viewing $E_j(y)$ as a function of location $y$. its derivative can be expressed as $\frac{\mathrm{d} E_j(y)}{\mathrm{d}y}=w_j\cdot (-L_j(y)+R_j(y))$.
    Consequently, we derive that 
    \begin{align*}
        E_j(f(\bftheta))-E_j(y^*)=\int_{f(\bftheta)}^{y^*}w_j\cdot (R_j(y)-L_j(y))\mathrm{d}y.
    \end{align*}
    Since there is no agent located in the interval $[f(\bftheta),y^*)$, the derivative is a constant value. We further have
    \begin{align*}
        E_j(f(\bftheta)) \!\! - \!\! E_j(y^*)\!=\!w_j(R_j(f(\bftheta)) \!\hspace{-.1em} - \hspace{-.1em}\! L_j(f(\bftheta))) \! \cdot \! (y^* \!\! - \!\! f(\bftheta)).
    \end{align*}
    Recall that $\mge(\bftheta,f(\bftheta))=\max_{j\in [m]}E_j(f(\bftheta))$ and $\mge(\bftheta,y^*)=\max_{j\in [m]}E_j(y^*)$. We then have 
    \begin{align*}
        \mge(\bftheta,f(\bftheta)) - \mge(\bftheta,y^*) 
        & \leq \max_{j\in [m]}\{E_j(f(\bftheta))-E_j(y^*)\} \\
        & \leq \max_{j\in [m]}\{w_j(R_j(f(\bftheta))-L_j(f(\bftheta)))(y^*-f(\bftheta))\}.
    \end{align*}
    On the other hand, there are $L_j(f(\bftheta))$ agents in each group $G_j$ who are at or on the left of $f(\bftheta)$. It follows that 
    \begin{align*}
        \mge(\bftheta,y^*)\geq \max_{j\in [m]}\{w_j\cdot L_j(f(\bftheta))\}\cdot (y^*-f(\bftheta)).
    \end{align*}
    Moreover, we have 
    $\max_{j\in [m]} \{w_j R_j(f(\bftheta))\} \leq \max_{j\in [m]} \{w_jL_j(f(\bftheta))\}$ as $f(\bftheta)$ is the outcome by \BALANCED mechanism.  
    With these inequalities in hand, we derive the approximation ratio $\rho$ of the \BALANCED mechanism.
    \begin{align*}
    \rho & =\hspace{-.2em}  \frac{\mge(\bftheta, f(\bftheta))}{\mge(\bftheta, y^*)} = 1 +  \frac{\mge(\bftheta, f(\bftheta))-\mge(\bftheta, y^*)}{\mge(\bftheta, y^*)}\\
    & \leq\hspace{-.2em} 1 \hspace{-.1em}+\hspace{-.1em} \frac{\max_{j \in [m]} \myset{ w_j (R_j(f(\bftheta)) \hspace{-.1em}- \hspace{-.1em}L_j(f(\bftheta)))} (y^* \hspace{-.1em}- \hspace{-.1em}f(\bftheta)) }{\max_{j \in [m]} \myset{ w_j L_j(f(\bftheta)) } (y^* \!-\! f(\bftheta))}\\
    &\leq \hspace{-.2em} 1 \hspace{-.1em}+\hspace{-.1em} \frac{\max_{j \in [m]} \{ w_j R_j(f(\bftheta)) \}}{\max_{j \in [m]} \{ w_j L_j(f(\bftheta)) \}} \leq 2.
    \end{align*}
    \textbf{Case 2: When $f(\bftheta) > y^*$.} Keep in mind that we still have the derivative expression for function $E_j(y)$. Observe that in the interval $[y^\ast,f(\bftheta))$, the derivative of function $E_j(y)$ is a constant value as there is no agent in the interval. By an analogous approach, we have 
    \begin{align*}
        E_j(f(\bftheta)) \hspace{-.1em} - \hspace{-.1em} E_j(y^*) \hspace{-.1em}=\hspace{-.1em}w_j(R_j(y^*)\hspace{-.1em}-\hspace{-.1em} L_j(y^*))\cdot (f(\bftheta)-y^*).
    \end{align*}
    We next establish the group effect difference between solution $f(\bftheta)$ and $y^*$.
    \begin{align*}
        \mge(\bftheta,f(\bftheta)) - \mge(\bftheta,y^*) 
        & \leq \max_{j\in [m]}\{E_j(f(\bftheta))-E_j(y^*)\} \\
        & \leq \max_{j\in [m]}\{w_j(R_j(y^*)-L_j(y^*))(y^*-f(\bftheta))\}.
    \end{align*}
    Since we know that there is no agent in the interval $(y^*, f(\bftheta))$, then for all the agents on the right of $y^*$, they must satisfy $x_i \geq f(\bftheta)$. Therefore, we can bound $\mge(\bftheta,y^*)$ by
    \begin{align*}
        \mge(\bftheta, y^*) \geq \max_{j\in [m]}\{w_j\cdot R_j(y^*)\}\cdot (f(\bftheta)-y^*).
    \end{align*}
    Based on the aforementioned analysis, we derive the approximation ratio 
    \begin{align*}
    \rho & = \frac{\mge(\bftheta, f(\bftheta))}{\mge(\bftheta, y^*)}\\
    & \leq 1 + \frac{\max_{j \in [m]} \myset{ w_j (R_j(y^*) - L_j(y^*))} (f(\bftheta)-y^*) }{\max_{j \in [m]} \myset{ w_j \cdot R_j(y^*) } (f(\bftheta)-y^*)}\\
    & \leq 1 + \frac{\max_{j \in [m]} \{ w_j R_j(y^*) \}}{\max_{j \in [m]} \{ w_j R_j(y^*) \}} = 2.
    \end{align*}
    Combining the analyses of both cases, we conclude that the \BALANCED mechanism achieves an approximation ratio of $2$ for minimizing the maximum group effect when $E_j = w_j \cdot \sum_{i \in G_j} c(f(\bftheta), x_i)$.
\end{proof}

Since $\mge$ generalizes the maximum cost objective when $m=n$ and $G_j = \{j\}$ with equal weights, the lower bound from \citet{ProcacciaT09} applies, confirming the tightness of \BALANCED’s approximation ratio.

\begin{corollary}[\citealp{ProcacciaT09}]
    Any deterministic mechanism has an approximation ratio of at least $2$ for $\mge$ when $E_j = w_j \cdot \sum_{i \in G_j} c(f(\bftheta), x_i)$.
\end{corollary}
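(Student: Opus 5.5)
The corollary is a special-case reduction: I will embed the classical maximum-cost lower bound of \citet{ProcacciaT09} into the $\WTGC$ version of $\mge$. Set $m=n$ and $G_j=\{j\}$ (so $g_j=\{j\}$) with all weights $w_j=1$. Each group effect is then $E_j=c(f(\bftheta),x_j)$, and $\mge(\bftheta,y)=\max_{i\in N}|y-x_i|$ is exactly the maximum cost. Groups are public and fixed, so any deterministic strategyproof mechanism for our setting, restricted to this group structure, is a deterministic strategyproof mechanism for the classical single-facility maximum-cost game. Its approximation ratio for $\mge$ on these instances equals its maximum-cost ratio. Hence the lower bound of $2$ for maximum cost transfers. (The statement says ``any deterministic mechanism''; I read it as ``any deterministic strategyproof mechanism'', since without strategyproofness the optimum is attainable.)

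For self-containment I will also sketch the two-agent argument. Take agents at $0$ and $1$, each forming its own group with weight $1$, and let $f$ be any deterministic strategyproof mechanism with output $y=f(\bftheta)$. If $y\notin[0,1]$, the maximum cost is at least $1$, while the optimum $\tfrac12$ is attained at $\tfrac12$, giving ratio $\ge 2$. Otherwise assume by symmetry $y\le\tfrac12$, and consider the profile where agent $2$ is at $1+y$ instead of $1$.

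The key step, and the one needing the most care, is pinning down the output $y'$ on this new profile. By strategyproofness, the agent at true location $1+y$ must not gain by reporting $1$, which yields $y$ at distance $1$ from $1+y$. So $|y'-(1+y)|\le 1$. Conversely, an agent at true location $1$ must not gain by reporting $1+y$, which gives $|y'-1|\ge |y-1|=1-y$.

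Combining the two bounds forces $y'\le y$ or $y'= 2+y$. Every option leaves one agent at distance at least $1+y$. The optimum on this profile is $(1+y)/2$, so the ratio is at least $2$. Together with \Cref{thm:balanced_main_theorem}, this shows the bound of $2$ is tight.
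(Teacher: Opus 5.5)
Your reduction in the first paragraph is exactly the paper's argument: take $m=n$, $G_j=\{j\}$ and $w_j=1$, so that $\mge$ is the maximum cost, and transfer the Procaccia--Tennenholtz bound of $2$ for deterministic strategyproof mechanisms. You are also right that the statement must be read with ``strategyproof'' added. The problem is the self-contained two-agent sketch, which does not prove the bound as written and should be dropped or repaired.

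The two strategyproofness constraints you derive are $|y'-(1+y)|\le 1$ and $|y'-1|\ge 1-y$. They intersect to $y'=y$ or $y'\in[2-y,2+y]$, not to ``$y'\le y$ or $y'=2+y$''. The second branch is fine. In the branch $y'=y$, however, the agents at $0$ and $1+y$ are at distances $y$ and $1$ from the facility. So the maximum cost is $1$, not at least $1+y$, and the ratio on this profile is only $2/(1+y)<2$ whenever $y>0$.

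This branch really occurs. The strategyproof mechanism $f(x_1,x_2)=\mathrm{med}(x_1,x_2,\frac{1}{2})$ has $y=\frac{1}{2}$ and outputs $\frac{1}{2}$ on $(0,\frac{3}{2})$. Your two profiles therefore certify only a ratio of $\frac{4}{3}$ for it. Pushing the far agent outward cannot work in general, because the mechanism is free to leave the facility where it was.

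The standard argument instead moves the \emph{near} agent onto the facility. With $y\le\frac{1}{2}$, consider the profile $(y,1)$. If $f(y,1)\neq y$, the agent at $y$ could report $0$ and obtain cost $0$, so strategyproofness forces $f(y,1)=y$. This gives maximum cost $1-y$ against the optimum $\frac{1-y}{2}$, a ratio of exactly $2$.
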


Note that the $\mge$ objective also generalizes the group-fair objectives $\mtgc$ and $\magc$ proposed by \citet{ZLC22a}. While \citet{ZLC22a} established
a $3$-approximation upper bound and a lower bound of $2$. Subsequent work obtained tight 2-approximation guarantees for these objectives
in related extensions \citep{AzizCSWX25,LiPWZ25}. \Cref{thm:balanced_main_theorem} recovers this tight factor as a special case of our more general weighted formulation.

\begin{corollary}
    The \BALANCED mechanism achieves a $2$-approximation ratio w.r.t. the $\mtgc$ and $\magc$ objectives.
\end{corollary}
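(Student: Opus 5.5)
The corollary should follow from \Cref{thm:balanced_main_theorem} by specializing the weights. As \Cref{fig:mge_generalization_metric} indicates, $\mtgc$ is the $\WTGC$ instance of $\mge$ with $w_j = 1$ for every $j\in[m]$. Similarly, $\magc$ is the instance with $w_j = 1/|G_j|$: the average cost of group $j$ is $\frac{1}{|G_j|}\sum_{i\in G_j} c(f(\bftheta),x_i)$, which is exactly $E_j$ with that weight.

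In both cases the weights are nonnegative constants that depend only on the public group memberships and not on any private location. The model requires $w_j \ge 0$ and treats $g_i$ as public. I would assume every group is nonempty, or else discard empty groups, which contribute $0$ to the maximum. With that, the weights are well defined, and the \BALANCED mechanism run with them is a legitimate mechanism within the framework.

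I would then invoke \Cref{thm:balanced_main_theorem} for each weight choice. Strategyproofness holds because the weights are fixed before any location is reported. For the approximation guarantee, the $\mge$ value of any location $y$ coincides with the $\mtgc$ (resp. $\magc$) value of $y$. Hence the optimal locations coincide, and the ratio bound of $2$ transfers verbatim.

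The only point needing care is that the mechanism instantiated for $\magc$ uses $|G_j|$. Since groups are public, this is admissible and introduces no manipulation channel. There is no real obstacle beyond checking these identifications. Tightness follows because the lower bound of $2$ from \citet{ZLC22a} already holds for both objectives.
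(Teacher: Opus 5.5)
Your proposal is correct and matches the paper's argument: the corollary is obtained by instantiating \Cref{thm:balanced_main_theorem} with $w_j=1$ (for $\mtgc$) and $w_j=1/|G_j|$ (for $\magc$), exactly as you do. Your added remarks are sensible details the paper leaves implicit: these weights depend only on public group memberships, so strategyproofness is unaffected, and empty groups can be discarded.
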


\subsection{Weighted Maximum Group Cost}
We next turn to the weighted maximum group cost objective, wherein $E_j = w_j \cdot \max_{i \in G_j} c(f(\bftheta), x_i)$. 
Intuitively, it is a weighted maximum cost problem where each agent $i$ is assigned with a maximum weight $w_{g_i}=\max_{j\in g_i}w_j$, and the objective is to minimize the maximum value of $w_{g_i} \cdot c(f(\bftheta), x_i)$ over all agents $i\in N$. In view of this, we propose the \MajorPhantom mechanism (Mechanism~\ref{algo:major_phantom}) which selects the facility location by prioritizing the group with the largest weight $w_j$. 

\begin{algorithm}[!htbp]
\caption{\MajorPhantom \textbf{Mechanism}}
\label{algo:major_phantom}
\begin{algorithmic}[1] 
\REQUIRE Agent profile $\bftheta$, group weights $\{w_j\}_{j\in [m]}$.
\STATE Let $G_{\max}$ denote the largest weight group and $\bfx^{G_{\max}}=\{x_1^{G_{\max}}, \dots, x_{|G_{\max
}|}^{G_{\max}}\}$ denote the location profile of agents in $G^*$, tie-breaking in favor of the smallest index.
\STATE Let $v_1 \leq \dots \leq v_{|G_{\max}|-1}$ denote $|G_{\max}|-1$ values $v_1 \leq \dots \leq v_{|G_{\max}|-1}$. 
\STATE $f(\bftheta)\leftarrow \medp(\bfx^{G_{\max}}, v_1,\dots,v_{|G_{\max}|-1})$, tie-breaking by selecting the leftmost.
\ENSURE Facility location $f(\bftheta)$.
\end{algorithmic}
\end{algorithm}
\MajorPhantom mechanism extends the \PHANTOM mechanisms \citep{Moulin80a} by prioritizing agents in the largest-weighted group, thereby ensuring fairness for groups with greater importance. Before analyzing the approximation ratio of the \MajorPhantom mechanism with respect to the $\WMGC$ objective, we first provide a characterization of the optimal solution in two-agent instances, which will facilitate the subsequent analysis.

\begin{lemma}\label{lem:two_agents}
Given two-agent profile $\bftheta$ and optimal solution $y^*$, for any two agents with locations $x_1 \leq x_2$ and maximum weights $w_{g_1}, w_{g_2} \geq 0$, $\mge(\bftheta, y^*)=\max_{j \in [m]} w_j \cdot \max_{i \in G_j} |y^* - x_i|$ is either
\begin{itemize}
    \item $w_{g_1} \cdot \frac{(x_2 - x_1)}{2}$ when $g_1=g_2$ with the maximum weight $w_{g_1}$, achieved at $y^*=\frac{x_1+x_2}{2}$; or 
    \item $\frac{w_{g_1}\cdot w_{g_2} \cdot (x_2 - x_1)}{w_{g_1} + w_{g_2}}$ when $w_{g_1}\neq w_{g_2}$, achieved at $y^* = \frac{w_{g_2} x_2 + w_{g_1} x_1}{w_{g_1} + w_{g_2}}$.
\end{itemize}
\end{lemma}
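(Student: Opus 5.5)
The plan is to reduce the two-agent $\WMGC$ objective to a one-dimensional minimax of two weighted distance functions, and then solve that by equating the two branches.

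\textbf{Reduction to a per-agent objective.} First I will rewrite the objective agent by agent. For any location $y$,
\begin{align*}
\max_{j\in[m]} w_j \max_{i\in G_j}|y-x_i| = \max_{i\in\{1,2\}} \max_{j\in g_i} w_j |y-x_i| = \max\{w_{g_1}|y-x_1|,\, w_{g_2}|y-x_2|\}.
\end{align*}
The swap of maxima is valid because every pair $(j,i)$ with $i\in G_j$ is the same as every pair with $j\in g_i$. So the problem becomes minimizing $h(y)=\max\{w_{g_1}|y-x_1|, w_{g_2}|y-x_2|\}$ over $y\in\mathbb{R}$.

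\textbf{Restricting to the segment.} Next I will argue that an optimal $y^*$ lies in $[x_1,x_2]$. If $y<x_1$, moving $y$ toward $x_1$ decreases both distances, so it does not increase $h$; the same holds symmetrically for $y>x_2$. On $[x_1,x_2]$ the first term $w_{g_1}(y-x_1)$ is nondecreasing and the second term $w_{g_2}(x_2-y)$ is nonincreasing, both continuous. When both weights are positive, the maximum of the two is minimized where they cross:
\begin{align*}
w_{g_1}(y-x_1)=w_{g_2}(x_2-y).
\end{align*}
Solving gives $y^*=\frac{w_{g_1}x_1+w_{g_2}x_2}{w_{g_1}+w_{g_2}}$. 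The common value is $\frac{w_{g_1}w_{g_2}(x_2-x_1)}{w_{g_1}+w_{g_2}}$, which is the second bullet.

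\textbf{The equal-weight case.} When $g_1=g_2$, or more generally $w_{g_1}=w_{g_2}$, the same formula specializes to the midpoint $y^*=\frac{x_1+x_2}{2}$ with value $w_{g_1}\frac{x_2-x_1}{2}$. This is the first bullet.

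\textbf{Main obstacle.} The substance is minor. The only care needed is in the degenerate cases. If one weight is $0$ (say $w_{g_1}=0$), the formula still gives $y^*=x_2$ with value $0$, which is consistent as long as $w_{g_1}+w_{g_2}>0$; if both weights are $0$ the objective is identically zero. The other point to handle is the lemma's phrasing "either/or". The two bullets are not mutually exclusive in spirit: the first is the equal-weight special case of the second formula. I will state explicitly that the second formula covers all cases and reduces to the first when the weights coincide.
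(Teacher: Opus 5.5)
Your proof is correct and follows essentially the same route as the paper: restrict the optimum to $[x_1,x_2]$ and place it where the two weighted distances $w_{g_1}(y-x_1)$ and $w_{g_2}(x_2-y)$ coincide. Your explicit swap of maxima, which reduces the objective to $\max\{w_{g_1}|y-x_1|,\,w_{g_2}|y-x_2|\}$, makes rigorous what the paper does in two separate cases with looser notation; so does your observation that the midpoint case is just the equal-weight specialization, which also covers $g_1\neq g_2$ with $w_{g_1}=w_{g_2}$, a case the lemma's either/or omits.
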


We next prove that for any \MajorPhantom mechanism, it is strategyproof and achieves an approximation ratio of $2$ for minimizing $\mge$ under the $\WMGC$ objective.
\begin{theorem}
    Any \MajorPhantom mechanism is strategyproof and has an approximation ratio of $2$ for minimizing the $\mge$ objective when $E_j=w_j\cdot \max_{i\in G_j} c(f(\bftheta),x_i)$.
\end{theorem}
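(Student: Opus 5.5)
The plan has three parts: strategyproofness from public group memberships plus the generalized median, an upper bound of $2$ from the fact that the facility stays within the span of the locations of $G_{\max}$, and a matching lower bound from the maximum cost special case.

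\textbf{Strategyproofness.} Group memberships are public, so the identity of $G_{\max}$ (with its fixed tie-breaking) cannot be manipulated. An agent $i\notin G_{\max}$ has no influence on the outcome, because $f(\bftheta)$ depends only on $\bfx^{G_{\max}}$ and the constants $v_1,\dots,v_{|G_{\max}|-1}$. Hence such an agent cannot gain by misreporting. For agents in $G_{\max}$, the mechanism is the median of their reports together with fixed phantom points, i.e.\ a generalized median (phantom) mechanism in the sense of \citet{Moulin80a}. The standard argument applies. If $x_i<f(\bftheta)$, any report $x_i'\le f(\bftheta)$ leaves the median unchanged. A report $x_i'>f(\bftheta)$ can only move the median weakly rightward, which is away from $x_i$. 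The case $x_i>f(\bftheta)$ is symmetric.

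\textbf{Approximation ratio.} First I rewrite the objective as an agent-level weighted maximum cost, $\mge(\bftheta,y)=\max_{i\in N} w_{g_i}|y-x_i|$, as noted before the mechanism. Let $W=w_{\max}$ be the weight of $G_{\max}$, so every agent satisfies $w_{g_i}\le W$, and every member $a$ of $G_{\max}$ has $w_{g_a}=W$.

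The key structural fact is that $f(\bftheta)\in[\min_{a\in G_{\max}}x_a,\max_{a\in G_{\max}}x_a]$. This holds because with $|G_{\max}|$ agent points and $|G_{\max}|-1$ phantoms, the leftmost median of the $2|G_{\max}|-1$ points has at least one agent point weakly to its left and at least one weakly to its right.

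Now fix any agent $i$. Since $f(\bftheta)$ lies in the hull of $G_{\max}$, there is an agent $a\in G_{\max}$ (one of the two extreme members) with $|f(\bftheta)-x_i|\le |x_a-x_i|$. The optimum must serve both $i$ and $a$, so $\mge(\bftheta,y^*)$ is at least the optimal value of the two-agent subinstance $\{i,a\}$. By Lemma~\ref{lem:two_agents}, that value is
\[
\frac{w_{g_i}W}{w_{g_i}+W}\,|x_a-x_i|\;\ge\;\frac{w_{g_i}}{2}\,|x_a-x_i|,
\]
using $W\ge w_{g_i}$. If $w_{g_i}=W$, the equal-weight case of Lemma~\ref{lem:two_agents} gives the same bound $W|x_a-x_i|/2$. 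In either case,
\[
w_{g_i}|f(\bftheta)-x_i|\le w_{g_i}|x_a-x_i|\le 2\,\mge(\bftheta,y^*).
\]
Taking the maximum over $i$ yields ratio at most $2$.

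\textbf{Tightness.} With $m=1$, the objective reduces to the maximum cost objective. The deterministic strategyproof lower bound of $2$ from \citet{ProcacciaT09} then applies, so the bound is tight.

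\textbf{Main obstacle.} The main obstacle is making the two-agent reduction fully rigorous. I must check that restricting to a subset of agents can only decrease the optimal objective, which holds because a maximum over fewer terms is smaller. I must also handle the case analysis of Lemma~\ref{lem:two_agents} when $i$ and $a$ have equal maximum weight but different groups. For that case I would simply argue directly: any $y$ satisfies $\max(|y-x_i|,|y-x_a|)\ge |x_a-x_i|/2$, which gives the needed bound.
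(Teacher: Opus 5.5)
Your proof is correct. It uses the same two ingredients as the paper's proof, but organizes them differently. The first is the median-counting fact that some member of $G_{\max}$ lies weakly on the far side of $f(\bftheta)$ from a given agent. The second is a two-agent lower bound on the optimum combined with $w_{\max}\ge w_{g_i}$.

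The paper fixes the agent $\ell$ attaining $\mge(\bftheta,f(\bftheta))$. It then splits on whether $y^*\ge f(\bftheta)$, on which side of $f(\bftheta)$ the point $x_\ell$ lies, and on the order of $x_\ell$ and $y^*$. Only in the final subcase does it invoke \Cref{lem:two_agents}, for $\ell$ and the nearest $G_{\max}$ agent $k$ weakly left of $f(\bftheta)$. The case $y^*<f(\bftheta)$, which needs a $G_{\max}$ agent weakly to the right, is left to symmetry.

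You instead prove both halves of the hull property at once. You then bound $w_{g_i}|f(\bftheta)-x_i|$ for every agent $i$ without locating $y^*$, since the two-agent bound holds wherever the optimum sits. The paper's case split does give the sharper ratio $1$ in some configurations, but that is not needed for the theorem. Your version is shorter and makes the omitted symmetric case explicit. It also shows that the phantoms play no role in the guarantee: any rule whose output lies in the convex hull of the locations of $G_{\max}$ is a $2$-approximation, and the phantoms matter only for strategyproofness.

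Your fallback inequality $\max(|y-x_i|,|y-x_a|)\ge |x_a-x_i|/2$, together with $w_{\max}\ge w_{g_i}$, already suffices in every case. So you can drop the exact two-agent optimum altogether. This also sidesteps the lemma's case structure, which as stated does not literally cover two agents in different groups with equal maximum weight.

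Your strategyproofness and tightness arguments match the paper's. For tightness you take $m=1$ rather than $n$ singleton groups with equal weights; this reduces equally to the maximum cost objective.
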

\begin{proof}
    \textbf{Strategyproofness}. Given any agent profile $\bftheta$, consider an agent $i \in N$ with true location $x_i$. If $i \notin G_{\max}$, it is clear that misreporting cannot influence the facility location under the mechanism. If $i \in G_{\max}$, then since group membership cannot be misreported, we can apply a similar analytical approach to that used in the proof of strategyproofness for \PHANTOM mechanisms by \citet{Moulin80a}.

    \textbf{Approximation Ratio.} Given any profile $\bftheta$, let $f(\bftheta)$ denote the location outputted by the \MajorPhantom mechanism and $y^*$ denote the optimal location under $\bftheta$. 

   We first consider the case that $y^\ast \geq f(\bftheta)$. Suppose that $\mge(\bftheta,f(\bftheta))$ is achieved by agent $\ell$. We first observe that if $x_\ell \leq f(\bftheta)$, we have $\mge(\bftheta, f(\bftheta))=w_{g_\ell}\cdot (f(\bftheta)-x_\ell)$ and $\mge(\bftheta,y^*)\geq w_{g_\ell}\cdot (y^*-x_\ell)$, which gives us
    \begin{align*}
        \rho = \frac{\mge(\bftheta, f(\bftheta))}{\mge(\bftheta, y^*)} \le \frac{w_{g_\ell}\cdot (f(\bftheta)-x_\ell)}{w_{g_\ell}\cdot (y^*-x_\ell)} \le 1.
    \end{align*}
If $x_\ell > f(\bftheta)$, let $k \in G_{\max}$ be the agent in group $G_{\max}$ whose location $x_k$ satisfies $x_k = \min_{j \in G_{\max}} {|x_j - f(\bftheta)|}$ with the additional condition that $x_k \leq f(\bftheta)$. That is, $x_k$ is the closest agent to the left of $f(\bftheta)$ within $G_{\max}$\footnote{If multiple agents satisfy this condition, we break ties by selecting the agent with the largest index $k$.}. We claim that such an agent $k$ always exists. Toward this end, suppose, for the sake of contradiction, that no such $x_k$ exists. It implies that all the agents' locations $\{x_1^{G_{\max}}, \dots, x_{|G_{\max}|}^{G_{\max}}\}$ lie strictly to the right of $f(\bftheta)$, i.e., $x_i^{G_{\max}} > f(\bftheta)$ for all $i \in G_{\max}$. However, under the \MajorPhantom mechanism, the facility is placed at the median of the multiset $\{\bfx^{G_{\max}}, v_1, \dots, v_{|G_{\max}|-1} \}$, which has a size of $2\cdot |G_{\max}| - 1$. In this case, there can be at most $|G_{\max}| - 1$ points strictly to the right of $f(\bftheta)$, contradicting the assumption. Therefore, such an agent $k$ always exists. 

If $x_k\leq f(\bftheta) < x_\ell < y^*$, we have $\mge(\bftheta,f(\bftheta))=w_{g_\ell}\cdot (x_\ell - f(\bftheta))\leq w_{g_\ell}\cdot (x_\ell-x_k)$ and $\mge(\bftheta,y^*)\geq w_{g_k}\cdot (y^*-x_k)\geq w_{g_k}\cdot (x_\ell-x_k)$. Hence, the approximation ratio is expressed as 
\begin{align*}
    \rho = \frac{\mge(\bftheta, f(\bftheta))}{\mge(\bftheta, y^*)} \le \frac{w_{g_\ell}\cdot (x_\ell-x_k)}{w_{g_k}\cdot (x_\ell-x_k)} = \frac{w_{g_\ell}}{w_{g_k}}.    
\end{align*}
Recall the definition of \MajorPhantom mechanism. We know that $\rho \leq \frac{w_{g_\ell}}{w_{g_k}} =\frac{w_{g_\ell}}{w_{\max}}\leq 1$ as $w_{g_k}=w_{\max}\geq w_{g_\ell}$.

If $x_k\leq f(\bftheta) < y^* < x_\ell$, we have $\mge(\bftheta, f(\bftheta))=w_{g_\ell}\cdot (x_{\ell}-f(\bftheta)) \leq w_{g_\ell}\cdot (x_\ell - x_k)$. By \Cref{lem:two_agents}, when only considering agent $k$ and $\ell$, we have the maximum cost achieved by these two agents is at least $\frac{w_{g_\ell} w_{g_k} (x_\ell-x_k)}{w_{g_\ell} + w_{g_k}}$. Hence, we have
$\mge(\bftheta, y^*) \ge \frac{w_{g_\ell} w_{g_k} (x_\ell-x_k)}{w_{g_\ell} + w_{g_k}}$.
Consequently, the approximation ratio is bounded by 
\begin{align*}
    \rho = \frac{\mge(\bftheta, f(\bftheta))}{\mge(\bftheta, y^*)} \le \frac{w_{g_\ell}\cdot (x_\ell-x_k)}{\frac{w_{g_\ell} w_{g_k} (x_\ell-x_k)}{w_{g_\ell} + w_{g_k}}} = \frac{w_{g_\ell} + w_{g_k}}{w_{g_k}}.
\end{align*}
Since $w_{g_k} = w_{\max} \geq w_{g_\ell}$, it follows that $ \rho = \frac{w_{g_\ell} + w_{g_k}}{w_{g_k}} = \frac{w_{g_\ell} + w_{\max}}{w_{\max}} \leq 2$.
For the case where $y^* < f(\bftheta)$, the same approximation ratio of $2$ can be established by applying an analogous analysis to that used for the case $f(\bftheta) \geq y^*$.
\end{proof}

Notice that the $\mge$ objective coincides with the maximum cost objective when $m=n$ and $G_j = \{j\}$ with equal weights. In this case, the lower bound of $2$ for the maximum cost objective established by \citet{ProcacciaT09} applies, thereby confirming the tightness of the bounds achieved by the \MajorPhantom mechanism. 
\begin{corollary}[\citealp{ProcacciaT09}]
    Any deterministic, strategyproof mechanism has an approximation ratio of at least $2$ for $\mge$ when $E_j = w_j \cdot \max_{i \in G_j} c(f(\bftheta), x_i)$.
\end{corollary}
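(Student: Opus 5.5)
The plan is a reduction: find a special case of the $\WMGC$ objective that coincides with the classical maximum cost objective, and then transfer the known lower bound of \citet{ProcacciaT09} for deterministic strategyproof mechanisms.

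Take $m=1$, a single group $G_1=N$, and weight $w_1=1$; then $E_1=\max_{i\in N} c(f(\bftheta),x_i)$, which is exactly the maximum cost. As shown in \Cref{fig:mge_generalization_metric}, the same holds with $m=n$, singleton groups $G_j=\{j\}$ and equal weights. Any deterministic strategyproof mechanism for the general $\mge$ problem restricts to a deterministic strategyproof mechanism on these instances. Strategyproofness is preserved because group memberships are public and fixed, so only locations can be misreported, just as in \citet{ProcacciaT09}. Any lower bound on this subclass therefore lower-bounds the general problem.

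To keep the argument self-contained, I would recall the two-agent argument. Consider the profile $x_1=0$, $x_2=1$. The optimum is $1/2$, and any mechanism that places the facility outside $[0,1]$ already has ratio greater than $2$. So, by symmetry, assume $f(0,1)=y\in[0,1/2]$, and consider agent $2$. I would use the profile $x_1=0$, $x_2=1+2y$. Suppose the facility were placed at $y'$ with $|y'-1| < 1-y$ under this profile; then agent $2$, truly at $1$, would gain by misreporting $1+2y$, contradicting strategyproofness. Hence $|f(0,1+2y)-1|\ge 1-y$, which forces $f(0,1+2y)\le y$ or $f(0,1+2y)\ge 2-y$. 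In both cases the maximum cost is at least $1+y$, while the optimum is $(1+2y)/2$. The ratio is $2(1+y)/(1+2y)$, which is not yet $\geq 2$ unless $y=0$. I would therefore follow Procaccia and Tennenholtz's exact choice of deviation point, pushing agent $2$ to $x_2'$ so that $f$ must lie at distance at least $x_2'/2$ from the midpoint; for instance take $x_2'=2$ relative to suitable scaling, so the maximum cost is at least $x_2'$ while the optimum is $x_2'/2$. Alternatively, I would simply cite their theorem directly.

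The only genuine obstacle is this last bookkeeping step, and it is inherited from prior work. The proof itself is just the embedding together with the observation that strategyproofness and the approximation ratio transfer verbatim to the restricted instances. Tightness then follows by combining this lower bound with the matching upper bound of $2$ for the \MajorPhantom mechanism proved in the preceding theorem.
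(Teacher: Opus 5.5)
Your core argument matches the paper's proof: you embed the classical maximum cost objective into the weighted maximum group cost objective and invoke the lower bound of \citet{ProcacciaT09}. The paper uses $m=n$ singleton groups with equal weights; your $m=1$, $G_1=N$ embedding is equally valid here. By citation alone, the statement is established.

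The self-contained two-agent sketch, however, is wrong, and the suggested repair does not fix it. Pushing agent $2$ outward never pins the facility. At $x_2'=2$, strategyproofness only forces $|f(0,2)-1|\ge 1-y$, and $f(0,2)=y$ satisfies this. So the maximum cost is only guaranteed to be $2-y$, not $x_2'=2$, and the ratio is only $2-y$. More generally, no outward deviation of agent $2$ certifies a ratio above $2-y$, which is strictly below $2$ whenever $y>0$. This is also not the deviation Procaccia and Tennenholtz use.

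The missing idea is to move an agent \emph{onto} the facility. With $f(0,1)=y\in(0,1/2]$, consider the profile $(0,y)$. If $f(0,y)\neq y$, agent $2$, truly at $y$, could report $1$ and obtain cost $0$. So strategyproofness forces $f(0,y)=y$. The maximum cost is then $y$ against an optimum of $y/2$, a ratio of exactly $2$; and $y=0$ already gives ratio $2$ on $(0,1)$. Either drop the sketch and rely on the citation, as the paper does, or replace it with this argument.
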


\section{Multi-Facility Mechanism Analysis}\label{sec:multi}
In this section, we extend our analysis from single-facility to multi-facility settings. In view of the impossibility result of \citet{FotakisT14}, which shows that for $k \geq 3$, no deterministic, anonymous, and strategyproof mechanism can achieve a bounded approximation ratio for either the social cost or maximum cost objectives, our primary focus is on the two-facility case ($k=2$).

\begin{corollary}
    When $k\geq 3$, there is no deterministic, anonymous, strategyproof mechanisms with a bounded approximation ratio for $\mge$, for either $E_j = w_j \cdot \sum_{i \in G_j} c(f(\bftheta), x_i)$ or $E_j = w_j \cdot \max_{i \in G_j} c(f(\bftheta), x_i)$.
\end{corollary}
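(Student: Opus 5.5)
The plan is a reduction: I will show that the impossibility of \citet{FotakisT14} for the classical social cost and maximum cost objectives carries over to $\mge$, because both objectives are special cases of $\mge$ (\Cref{fig:mge_generalization_metric}). The key fact is that strategyproofness, determinism and anonymity are properties of the mechanism alone. Group memberships and weights are public inputs and are never misreported. Hence, once we fix a group structure and a weight vector, any mechanism for $\mge$ becomes a mechanism for the classical $k$-facility location game on $n$ agents, and it keeps all three properties.

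For the $\WTGC$ formulation, I will take $m=1$, $G_1=N$ and $w_1=1$. Then $E_1=\sum_{i\in N}c(f(\bftheta),x_i)$, so $\mge$ coincides with the social cost on every profile, and the optimal $\mge$ equals the optimal social cost. Suppose a deterministic, anonymous, strategyproof mechanism had bounded ratio $\rho$ for $\mge$ in this formulation. Restricting it to this group structure would give a mechanism of the same kind with ratio $\rho$ for social cost when $k\ge 3$. This contradicts \citet{FotakisT14}. Alternatively, I could take $m=n$ with singleton groups and equal weights to obtain the maximum cost, which works equally well.

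For the $\WMGC$ formulation, I will take $m=1$, $G_1=N$ and $w_1=1$. Then $E_1=\max_{i\in N}c(f(\bftheta),x_i)$, which is the maximum cost objective, and the same restriction argument applies using the maximum cost part of \citet{FotakisT14}.

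The only delicate point is anonymity. In the restricted instance every agent has the same public type $g_i=\{1\}$. Permuting agents therefore permutes complete types, so anonymity of the $\mge$ mechanism implies anonymity of the induced classical mechanism. One might also worry that the impossibility result of \citet{FotakisT14} needs instances with arbitrarily many agents, or with coinciding locations. Our reduction works for every $n$ and places no restriction on the locations, so every instance used in their argument is realizable here. With these checks, a bounded ratio for $\mge$ would give a bounded ratio for the classical objective, and the corollary follows.
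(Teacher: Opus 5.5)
Your proposal is correct and takes the same route as the paper. The paper simply notes that $\mge$ specializes to the social cost and maximum cost objectives and invokes \citet{FotakisT14}; your $m=1$, $G_1=N$ restriction, together with the check that identical public types make the induced location mechanism anonymous, supplies the justification the paper leaves implicit.

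One caution: drop the aside that singleton groups $G_j=\{j\}$ ``work equally well'' for $\WTGC$. With distinct public labels, the induced location-only mechanism need not be anonymous (for example, the agent with $g_i=\{1\}$ could act as a dictator), so the anonymous impossibility of \citet{FotakisT14} would not apply directly.
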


We next restrict our attention to the case of $k=2$ and revisit the \ENDPOINT mechanism (placing facilities at the leftmost and rightmost agent locations), which remains the only known deterministic, anonymous, and strategyproof mechanism with bounded approximation guarantees for these objectives \citep{FotakisT14}.

While \citet{FotakisT14} established that the \ENDPOINT mechanism is the only deterministic, anonymous, and strategyproof mechanism with bounded approximation guarantees for social cost in the two-facility setting ($k=2$), evaluating its performance under our group-centric $\mge$ objective presents a novel and nontrivial challenge. Unlike classical objectives, the $\mge$ objective requires accounting for weighted group effects, where both the group structures and the distribution of group weights play a critical role, which demand a fundamentally different analytical approach. Our contribution lies in establishing tight approximation bounds for the \ENDPOINT mechanism under $\mge$, thereby extending its applicability to equitable facility placement and offering theoretical insights into group fairness.

\subsection{Weighted Total Group Cost}
We first explore the maximum group effect objective by considering the weighted total group cost (\WTGC). Our result shows that the \ENDPOINT mechanism achieves an approximation ratio of $1+(n-2)\cdot \frac{w_{\max}}{w_{\min}}$.
\begin{theorem}
    The \ENDPOINT mechanism has an approximation ratio of $1 + (n-2)\frac{w_{\max}}{w_{\min}}$ for minimizing $\mge$ when $E_j = w_j \cdot \sum_{i \in G_j} c(f(\bftheta), x_i)$.
\end{theorem}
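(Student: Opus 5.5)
The upper bound will come from a triangle-inequality argument that charges each interior agent's cost under \ENDPOINT to its own optimal cost plus the distance from an extreme agent to its optimal facility. The lower bound will come from an explicit instance. Throughout I assume every agent belongs to at least one group, so $w_{g_i}\ge w_{\min}$ for all $i$.

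\textbf{Upper bound.} Let $\bftheta$ be any profile and let $(y_1^*,y_2^*)$ be an optimal pair of locations. Let $z_1\in\{y_1^*,y_2^*\}$ be the optimal facility nearest to agent $1$ and $z_n$ the one nearest to agent $n$; possibly $z_1=z_n$. Set
\[
\Delta=\max\{|z_1-x_1|,\,|z_n-x_n|\}.
\]
Since agents $1$ and $n$ each belong to some group of weight at least $w_{\min}$, we get $\mge(\bftheta,\OPT(\bftheta))\ge w_{\min}\Delta$.

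Now take an interior agent $i$, and suppose in the optimum it is served by $z\in\{y_1^*,y_2^*\}$. Since $\{z_1,z_n\}=\{y_1^*,y_2^*\}$ in the two-facility case, $z$ coincides with $z_1$ or with $z_n$. If $z=z_1$, the triangle inequality gives
\[
c(\{x_1,x_n\},x_i)\le|x_i-x_1|\le|x_i-z|+|z-x_1|\le c(\OPT,x_i)+\Delta,
\]
and the case $z=z_n$ is symmetric. The extreme agents $1$ and $n$ have cost $0$ under \ENDPOINT. Fix a group $G_j$; it contains at most $n-2$ interior agents. Summing the bound over its members gives
\[
E_j(\ENDPOINT)\le w_j\sum_{i\in G_j}c(\OPT,x_i)+w_j(n-2)\Delta\le \mge(\OPT)+(n-2)w_{\max}\Delta .
\]
Taking the maximum over $j$ and dividing by $\mge(\OPT)\ge w_{\min}\Delta$ yields the ratio $1+(n-2)\frac{w_{\max}}{w_{\min}}$. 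The points to double-check are degenerate cases: when $z_1=z_n$, all agents are served by $z_1$ or by the other facility, and when the other optimal facility serves only interior agents. In both cases the charging above still applies, because the bound $|x_i-x_1|\le|x_i-z|+|z-x_1|$ needs only that the facility $z$ serving agent $i$ lies within $\Delta$ of an endpoint. This is exactly the obstacle: an interior agent served by a facility that is neither $z_1$ nor $z_n$ is impossible when $k=2$, so I must argue that cleanly.

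\textbf{Lower bound.} I will build an instance where every inequality above is nearly tight. Place agent $1$ at $0$, agents $2,\dots,n-1$ at $1$, and agent $n$ at a far point $M\gg 1$. Put the interior agents together with agent $1$ into a group of weight $w_{\max}$, and put agent $n$ alone in a group of weight $w_{\min}$. Then adjust positions (for instance, let agent $1$ sit in a light group only, or shift the interior cluster slightly) so that:
\begin{itemize}
\item the optimum places a facility at the far agent and one near the cluster, paying about $w_{\min}\cdot 1$ from the light group;
\item \ENDPOINT pays about $w_{\max}(n-2)\cdot 1$ from the heavy group plus a term of order $w_{\min}$.
\end{itemize}
Getting the exact additive ``$1+$'' may require tuning the groups and weights with a limit $\epsilon\to0$. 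My first attempt is to let the light group contain agent $1$ together with the interior agents weighted so that its effect equals the optimum. This tuning is the step I expect to need the most trial and error; the ratio $(n-2)\frac{w_{\max}}{w_{\min}}$ is immediate from the basic instance.
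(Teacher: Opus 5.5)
Your upper bound is correct and is the paper's argument in different words. The paper moves $y_1^*\to x_1$ and $y_2^*\to x_n$, charges each of the at most $n-2$ interior agents of a group the displacement $\max\{y_1^*-x_1,\,x_n-y_2^*\}$ of its serving facility, and lower-bounds the optimum by $w_{\min}$ times that displacement using agents $1$ and $n$. Your triangle-inequality charge with $\Delta$ is exactly this. The step you flag as the obstacle is one line. With two facilities, the set of points weakly closer to one facility than to the other is a closed half-line, so if $x_1$ and $x_n$ both prefer the same facility, every agent does. Equivalently, clamp an optimal solution into $[x_1,x_n]$, which never increases any cost. The paper does this when it asserts $x_1\le y_1^*\le y_2^*\le x_n$, after which $z_1=y_1^*$ and $z_n=y_2^*$.

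The genuine gap is the tightness half, which is part of the statement, and your claim that $(n-2)\frac{w_{\max}}{w_{\min}}$ is ``immediate from the basic instance'' is false. In that instance agent $1$ is in the heavy group. So the solution with facilities at $1$ and $M$ pays $w_{\max}\cdot 1$ for agent $1$, while the light group, consisting of agent $n$ alone, pays nothing. Take $M>2+\frac{w_{\max}}{w_{\min}}$. Any solution of value below $w_{\max}$ would need a facility beyond $2$. The heavy group then pays at least $w_{\max}\min_a\{|a|+(n-2)|1-a|\}=w_{\max}$. Hence the optimum is $w_{\max}$ and the ratio is only $n-2$. The remaining ``tuning'' is left undone, and your ``first attempt'' (also putting interior agents in the light group) does not help. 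Of the adjustments you list, making agent $1$ the sole member of the light group is the right one, and it needs no limit. Take $x_1=0$, agents $2,\dots,n-1$ at $\frac{1}{2}$, and $x_n=1$, with $G_1=\{1\}$, $w_1=w_{\min}$, and $G_2=\{2,\dots,n\}$, $w_2=w_{\max}$. \ENDPOINT pays $\frac{w_{\max}(n-2)}{2}$. Now place facilities at $1$ and at $t$, where $w_{\min}t=w_{\max}(n-2)(\frac{1}{2}-t)$. This is feasible with value $\frac{w_{\min}w_{\max}(n-2)}{2(w_{\min}+w_{\max}(n-2))}$. Any feasible solution upper-bounds the optimum, so the ratio is at least $1+(n-2)\frac{w_{\max}}{w_{\min}}$, matching your upper bound exactly. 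This is the paper's instance.
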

\begin{proof}
    Given any agent profile $\bftheta$, let $Y=(x_1,x_n)$ denote the outputs of the \ENDPOINT mechanism and $Y^*=(y_1^*,y_2^*)$ (w.l.o.g, $y^*_1\leq y^*_2$) denote the optimal facility locations which achieves optimal $\mge(\bftheta,Y^*)$. We first observe that $x_1\leq y^*_1 \leq y^*_2 \leq x_n$. 
    For any group $G_j$, suppose there are $k_1^j$ agents (excluding agent $1$) who are assigned to facility $y^*_1$ while $k_2^j$ agents (excluding agent $n$) assigned to facility $y^*_2$\footnote{Breaking ties by assigning to $y^*_1$}. Now we consider the follow movement, moving one facility from $y^*_1$ to $x_1$ and the other facility from $y^*_2$ to $x_n$. For group $G_j$, after the movement, the changes of the group effect is expressed as 
    \begin{align*}
        E_j(Y)\hspace{-.1em} -\hspace{-.1em} E_j(Y^*) &\leq w_j  \left(k_1^j(y^*_1-x_1)+k_2^j(x_n-y^*_2)\right) \\
        &\leq w_j(k_1^j+k_2^j)\hspace{-.1em}\cdot\hspace{-.1em} \max\{y^*_1-x_1,x_n-y^*_2\} \\
        & \leq w_j(n-2)\hspace{-.1em}\cdot\hspace{-.1em} \max\{y^*_1-x_1,x_n-y^*_2\}.
    \end{align*}
    Recall that $w_{\max} = \max_j w_j$ and $w_{\min} = \min_j w_j$. Now we consider the $\mge$ objective and have 
    \begin{align}
        \mge(\bftheta,Y)-\mge(\bftheta,Y^*)
        &\leq\max_{j\in [m]}(E_j(Y)-E_j(Y^*))\notag \\
       & \leq w_{\max}\cdot (n-2)\cdot \max\{y^*_1-x_1,x_n-y^*_2\}.\label{eq:2}
    \end{align}
    On the other hand, since there exists at least one agent who is assigned to each facility under the optimal solution $Y^*$, we have the lower bound that
    \begin{align}
        \mge(\bftheta,Y^*) &\geq \max\{w_{g_1}(y^*_1-x_1),w_{g_n}(x_n-y^*_2)\}\notag \\
        &\geq w_{\min}\cdot \max\{y^*_1-x_1, x_n-y^*_2\}.\label{eq:3}
    \end{align}
    From \Cref{eq:2} and \Cref{eq:3}, we derive the upper bound of the approximation ratio $\rho$
\begin{align*}
      \frac{\mge(\bftheta, Y)}{\mge(\bftheta, Y^*)}
      \le& 1+ \frac{(n-2)w_{\max}\hspace{-0.1em}\cdot\hspace{-0.1em} \max\left\{y_1^* - x_1, x_n - y_2^* \right\}}{\max\left\{w_{g_1}(y_1^* - x_1), w_{g_n}(x_n - y_2^*) \right\}}\\
       \le& 1+ \frac{(n-2)w_{\max}\hspace{-.1em}\cdot \hspace{-.1em}  \max\left\{y_1^* - x_1, x_n - y_2^* \right\}}{w_{\min} \max\left\{y_1^* - x_1, x_n - y_2^* \right\}}\\
       \le& 1+(n-2)\frac{w_{\max}}{w_{\min}}.
\end{align*}
To show the tightness, consider an instance with $n$ agents where $x_1=0,x_2=x_3=\dots=x_{n-1}=\frac{1}{2}$, and $x_n=1$, and $G_1=\{1\}$, $G_2=\{2,3,\dots,n\}$. For group weights, let $w_1=w_{\min}$ and $w_2=w_{\max}$. We first observe the optimal solution $Y^*=(y_1^*=\frac{w_{\max} (n-2)}{2 [w_{\min} + w_{\max} (n-2)]}, y_2^*=1)$, achieving $\mge(\bftheta,Y^*)=\frac{w_{\min} w_{\max} (n-2)}{2 [w_{\min} + w_{\max} (n-2)]}$. In contrast,  the \ENDPOINT mechanism has an $\mge$ of $\frac{w_{\max} (n-2)}{2}$. This gives us an approximation ratio of $1+(n-2)\cdot \frac{w_{\max}}{w_{\min}}$.
\end{proof}

We adapt the characterization of \citet{FotakisT14}, which identifies the \ENDPOINT mechanism as the unique deterministic, anonymous, and strategyproof mechanism with bounded approximation ratio for $k=2$, to establish the tightness.
\begin{proposition}\label{prop:lower_bound_wtgc}
    Any deterministic, strategyproof mechanism has an approximation ratio of at least $1+(n-2)\cdot \frac{w_{\max}}{w_{\min}}$ when $E_j = w_j \cdot \sum_{i \in G_j} c(f(\bftheta), x_i)$.
\end{proposition}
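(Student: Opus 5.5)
The plan is to combine the characterization of \citet{FotakisT14} with the tight instance from the preceding theorem. The stated lower bound literally quantifies over all deterministic strategyproof mechanisms, but the sentence before the proposition makes clear the intended class is deterministic, \emph{anonymous}, strategyproof mechanisms with bounded ratio. I will prove it for that class and note the restriction explicitly.

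\textbf{Step 1 (reduction to \ENDPOINT).} Any deterministic, anonymous, strategyproof two-facility mechanism with bounded $\mge$ ratio must be the \ENDPOINT mechanism on the relevant profiles. To show this, specialize $\mge$ under $\WTGC$ to the social cost by taking $m=1$ and $G_1=N$. The ratio for $\mge$ on these profiles equals the social cost ratio, so boundedness for $\mge$ forces boundedness for social cost. Then \citet{FotakisT14} says such a mechanism (for $n\ge 3$, on the line) is \ENDPOINT.

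\textbf{Step 2 (profiles versus groups).} There is a subtlety here: a mechanism may depend on the public groups $g_i$ and weights, so the reduction in Step 1 must be applied for the group structure we actually use. I would argue as follows. Fix the partition $G_1=\{1\}$, $G_2=\{2,\dots,n\}$ and the weights $w_1=w_{\min}$, $w_2=w_{\max}$. The mechanism restricted to this fixed public structure is a mechanism on locations only. I then rerun the Fotakis--Tzamos argument with bounded $\mge$ in place of bounded social cost. Under the fixed structure, $\mge$ and social cost are within factors depending only on $n$ and the weights, since
\[
w_{\min}\cdot\tfrac{\mathrm{SC}}{n}\le \mge \le w_{\max}\,\mathrm{SC}.
\]
Hence a bounded $\mge$ ratio implies a bounded social-cost ratio, and the characterization applies for this fixed structure.

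Anonymity must also be handled. Across groups it is unnatural, so I would require anonymity only within the location-reporting part, or else simply adopt the paper's framing that anonymity is assumed. This is the step I expect to be the main obstacle: making sure the characterization genuinely transfers when the mechanism can see group labels. My first attempt is the comparability inequality above, together with the observation that the proof of \citet{FotakisT14} uses only strategyproofness, anonymity, and the bounded ratio on location profiles.

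\textbf{Step 3 (tight instance).} Since the mechanism must output $(x_1,x_n)$, evaluate it on the instance from the proof of the preceding theorem: $x_1=0$, $x_2=\dots=x_{n-1}=\tfrac12$, $x_n=1$, with the groups and weights fixed in Step 2. The optimum is
\[
\frac{w_{\min}w_{\max}(n-2)}{2\,[w_{\min}+w_{\max}(n-2)]},
\]
while \ENDPOINT gives $\frac{w_{\max}(n-2)}{2}$. The ratio is therefore exactly $1+(n-2)\frac{w_{\max}}{w_{\min}}$, which completes the lower bound.
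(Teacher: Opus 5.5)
\textbf{Gap: the dictatorial branch is never handled.} Your argument only covers mechanisms that reduce to \ENDPOINT. The proposition, however, is about every deterministic strategyproof mechanism. The characterization of \citet{FotakisT14} that you rely on (stated for $n\ge 5$, not $n\ge 3$) has two branches: \ENDPOINT, or a mechanism with a unique dictator. You dispose of the second branch by importing anonymity, which is not a hypothesis of the statement. The paper's proof instead treats the dictatorial branch explicitly.

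The anonymity you propose does not even close your restricted claim. Once you fix the public structure $G_1=\{a\}$, $G_2=N\setminus\{a\}$ as in Step 2, the only meaningful anonymity is invariance under swapping agents with identical public types. That is fully compatible with $a$, the lone member of $G_1$, being the dictator; only a dictator inside $G_2$ is ruled out. So the step you flag as the main obstacle is exactly where the proof breaks. The reduced mechanism need not be \ENDPOINT, and your assertion in Step 3 that it must output $(x_1,x_n)$ is unjustified.

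\textbf{The missing idea.} Choose the instance so that the dictator sits at an extreme location; the dictatorial outcome then coincides with \ENDPOINT. The paper does this, for the dictatorial rule it quotes from \citet{FotakisT14}, as follows. Take the dictator $j$ as the lone $w_{\min}$-agent at $0$, with $n-2$ agents at $\tfrac12$ and one at $1$, all in the $w_{\max}$ group. Then $d_l=0<d_r=1$, so the second facility goes to $x_j+\max\{d_r,2d_l\}=1$. The output is $\{0,1\}$, and the ratio is again $1+(n-2)\frac{w_{\max}}{w_{\min}}$.

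\textbf{Your Step 2 is worth keeping.} The fixed-structure reduction via $w_{\min}\,\mathrm{SC}/n\le\mge\le w_{\max}\,\mathrm{SC}$ is a sound point that the paper glosses over: invoking only the $m=1$ special case says nothing about how a label-dependent mechanism behaves on two-group profiles. If you keep it, note that without anonymity the dictator may also lie in $G_2$. In that case place it at $1$ and the $G_1$ agent at $0$. Then $d_l=1>d_r=0$, the second facility goes to $1-\max\{1,0\}=0$, and the output is again $\{0,1\}$.
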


\subsection{Weighted Maximum Group Cost}
We now turn to the weighted maximum group cost ($\WMGC$) objective. Under this criterion, the \ENDPOINT mechanism attains an approximation ratio of ($1+\frac{w_{\max}}{w_{\min}}$).
\begin{theorem}
    \label{thm:endpoint_wmgc_upper_bound}
    The \ENDPOINT mechanism has an approximation ratio of $1 + \frac{w_{\max}}{w_{\min}}$ for minimizing $\mge$ when $E_j = w_j \cdot \max_{i \in G_j} c(f(\bftheta), x_i)$.
\end{theorem}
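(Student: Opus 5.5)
We mirror the proof of the $\WTGC$ case: an upper bound by comparing $Y=(x_1,x_n)$ with an optimal $Y^*=(y_1^*,y_2^*)$, $y_1^*\le y_2^*$, followed by a matching instance. As before, we may assume $x_1\le y_1^*\le y_2^*\le x_n$; otherwise a facility could be moved toward the agents without increasing any $E_j$. Since $\WMGC$ is a weighted maximum cost with agent weight $w_{g_i}$, we have $\mge(\bftheta,Y)=\max_i w_{g_i} c(Y,x_i)$, and likewise for $Y^*$.

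\textbf{Upper bound.} Let $\ell$ be the agent attaining $\mge(\bftheta,Y)$. Assign $\ell$ to its nearest optimal facility, say $y_1^*$ (the case $y_2^*$ is symmetric). By the triangle inequality,
\begin{align*}
c(Y,x_\ell)\le |x_\ell-x_1| \le |x_\ell-y_1^*|+(y_1^*-x_1).
\end{align*}
Multiplying by $w_{g_\ell}$, the first term contributes at most $w_{g_\ell}c(Y^*,x_\ell)\le \mge(\bftheta,Y^*)$. For the second term, agent $1$ is served by $y_1^*$ in $Y^*$ (it is the leftmost agent and $y_1^*\le y_2^*$), so $w_{g_1}(y_1^*-x_1)\le \mge(\bftheta,Y^*)$, which gives $y_1^*-x_1\le \mge(\bftheta,Y^*)/w_{\min}$. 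Hence
\begin{align*}
\mge(\bftheta,Y)\le \mge(\bftheta,Y^*)+ w_{g_\ell}\,\frac{\mge(\bftheta,Y^*)}{w_{\min}} \le \left(1+\frac{w_{\max}}{w_{\min}}\right)\mge(\bftheta,Y^*).
\end{align*}
The main subtlety is the case where $\mge(\bftheta,Y^*)=0$ or some $w_j=0$. If the optimum is zero, then all agents lie on at most two points, and \ENDPOINT is also optimal. Weight zero can be handled by convention, since the ratio is infinite when $w_{\min}=0$. We also need $x_1$ to be served by $y_1^*$, which holds by the ordering argument above.

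\textbf{Tightness.} Take three agents with $x_1=0$, $x_2=\frac12$ and $x_3=1$, and groups $G_1=\{1,3\}$ with weight $w_{\min}$ and $G_2=\{2\}$ with weight $w_{\max}$. \ENDPOINT gives $\mge=w_{\max}\cdot\frac12$. For the optimum, let one facility serve agents $1$ and $2$, placed at the weighted point from \Cref{lem:two_agents}, and put the other facility at $x_3$. This yields cost $\frac{w_{\min}w_{\max}\cdot\frac12}{w_{\min}+w_{\max}}$. The resulting ratio is $\frac{w_{\min}+w_{\max}}{w_{\min}}=1+\frac{w_{\max}}{w_{\min}}$. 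We still need to check that no alternative pairing does better; grouping $\{2,3\}$ gives the same cost by symmetry, and grouping $\{1,3\}$ is clearly worse.
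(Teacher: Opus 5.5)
Your proof is correct, and your upper bound takes a different and shorter route than the paper's. The paper fixes the bottleneck agent $k$ and assumes $x_k\le\frac{x_1+x_n}{2}$, so that $k$ is served by $x_1$ under the \ENDPOINT mechanism. It then splits into four cases according to where $x_k$ lies relative to $y_1^*,y_2^*$ and which optimal facility serves $k$. The middle cases use \Cref{lem:two_agents} on the pair $(1,k)$ or $(k,n)$, and the midpoint assumption is needed again in Cases 3 and 4.

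You instead compare $\ell$ with the \ENDPOINT facility on the same side as the optimal facility that serves $\ell$. A single triangle inequality, together with the fact that the extreme agent is served by the extreme optimal facility, then does all the work. This avoids the positional case split, the two-agent lemma, and the midpoint assumption. It also gives the instance-wise bound $1+w_{g_\ell}/w_{g_1}$ (or $1+w_{g_\ell}/w_{g_n}$). The paper's case analysis is slightly sharper in its Cases 1 and 4, giving $w_{\max}/w_{\min}$ there, but that does not affect the worst case. Your separate treatment of a zero optimum is unnecessary when $w_{\min}>0$, since your chain of inequalities already forces $\mge(\bftheta,Y)=0$.

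For tightness, your three-agent instance is sound. The paper states an $n$-agent instance with $x_1=\dots=x_{n-2}=0$ and $G_2=\{2,\dots,n\}$, which coincides with yours up to relabeling when $n=3$. For $n\ge 4$, however, agents $2,\dots,n-2$ at $0$ carry weight $w_{\max}$. Under the proposed $Y^*$ they incur weighted cost $\frac{w_{\max}^2}{2(w_{\max}+w_{\min})}$, so the claimed optimal value is not attained, and that instance actually has ratio only $2$. Your version avoids this problem. If you want tightness for every $n\ge 3$, pad your instance with extra agents at $1$, or with extra agents at $0$ in the $w_{\min}$ group.
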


\begin{proof}
    Given any agent profile $\bftheta$, Denote by $Y=(x_1,x_n)$ the outputs of the \ENDPOINT mechanism and $Y^*=(y_1^*,y_2^*)$ ($y^*_1\leq y^*_2$) the optimal facility placement which achieves optimal $\mge(\bftheta,Y^*)$. We first observe that $x_1\leq y^*_1 \leq y^*_2 \leq x_n$. Without loss of generality, assume $\mge(\bftheta,Y)$ is achieved by agent $k$ and $x_k\leq \frac{x_1+x_n}{2}$, i.e., $k$ is assigned to facility located at $x_1$ under $Y$.

    \textbf{Case 1. When $x_1 \leq x_k \leq y_1^*$,} we have $\mge(\bftheta,Y)=w_{g_k}(x_k-x_1)$ and $\mge(\bftheta,Y^*)\geq w_{g_1}(y^*_1-x_1)$. The approximation ratio $\rho$ is upper-bounded by 
    \begin{align*}
    \frac{\mge(\bftheta, Y)}{\mge(\bftheta, Y^*)} 
    \hspace{-.1em}\le \hspace{-.1em}\frac{w_{g_k}\hspace{-.3em}\cdot\hspace{-.2em} (x_k-x_1)}{w_{g_1}\hspace{-.3em}\cdot\hspace{-.2em} (y^*_1-x_1)} \hspace{-.1em}\le \hspace{-.1em}\frac{w_{g_k}\hspace{-.3em}\cdot \hspace{-.2em}(x_k-x_1)}{w_{g_1}\hspace{-.3em}\cdot\hspace{-.2em} (x_k-x_1)} \hspace{-.1em}\le \hspace{-.1em}\frac{w_{\max}}{w_{\min}}.
    \end{align*}

    \textbf{Case 2. When $y^*_1<x_k<y^*_2$ and $k$ is assigned to $y_1^*$.} By \Cref{lem:two_agents}, we have $\mge(\bftheta,Y^*) \geq \frac{w_{g_1}w_{g_k}(x_k-x_1)}{w_{g_1}+w_{g_k}}$ and the approximation ratio $\rho$ is upper-bounded by 
    \begin{align*}
        \frac{\mge(\bftheta,Y)}{\mge(\bftheta,Y^*)}\leq \frac{w_{g_k}\cdot (x_k-x_1)}{\frac{w_{g_1}w_{g_k}(x_k-x_1)}{w_{g_1}+w_{g_k}}}\leq 1+\frac{w_{g_k}}{w_{g_1}}\leq 1+\frac{w_{\max}}{w_{\min}}.
    \end{align*}

    \textbf{Case 3. When $y^*_1<x_k<y^*_2$ and agent $k$ is assigned to $y_2^*$.} Similarly, we can lower-bound $\mge(\bftheta, Y^*) \geq \frac{w_{g_k}w_{g_n}(x_n-x_k)}{w_{g_k}+w_{g_n}}$. Recall that $x_k\leq \frac{x_1+x_n}{2}$. It implies that $x_k-x_1\leq x_n-x_k$. So we have the lower bound for the approximation ratio $\rho$ that
    \begin{align*}
    \frac{\mge(\bftheta, Y)}{\mge(\bftheta, Y^*)} \leq \frac{w_{g_k}\cdot (x_k-x_1)}{\frac{w_{g_k}w_{g_n}(x_n-x_k)}{w_{g_k}+w_{g_n}}}\leq 1 + \frac{w_{g_k}}{w_{g_n}}\leq 1 + \frac{w_{\max}}{w_{\min}}.
    \end{align*}

    \textbf{Case 4. When $y^*_2\leq x_k\leq x_n$.} In this case, agent $k$ is assigned to the facility located at $y^*_2$. Hence we derive that $\mge(\bftheta,Y^*) \geq w_n(x_n-y^*_2)$. Consequently, the approximation ratio $\rho$ satisfies 
    $\rho = \frac{\mge(\bftheta, Y)}{\mge(\bftheta, Y^*)} 
    \leq \frac{w_{g_k}\cdot (x_k-x_1)}{w_n \cdot (x_n-y_2^*)}$.
    Note that we also have $x_k-x_1 \leq x_n-x_k$ and $x_n-y^*_2 \geq x_n-x_k$. Therefore, we bound the approximation ratio $\rho$ by
    \begin{align*}
   \frac{\mge(\bftheta, Y)}{\mge(\bftheta, Y^*)} 
    \hspace{-.1em}\leq \hspace{-.1em}\frac{w_{g_k}\hspace{-.3em}\cdot \hspace{-.2em}(x_k-x_1)}{w_n \hspace{-.3em}\cdot \hspace{-.2em} (x_n-y_2^*)} \hspace{-.1em}\leq \hspace{-.1em} \frac{w_{g_k}\hspace{-.3em}\cdot \hspace{-.2em} (x_n-x_k)}{w_n \hspace{-.3em}\cdot \hspace{-.2em} (x_n-x_k)}\hspace{-.1em}\leq \hspace{-.1em}\frac{w_{\max}}{w_{\min}}. 
    \end{align*}
    Combining the analysis across all four cases, we conclude that the \ENDPOINT mechanism achieves an approximation ratio of $1+\frac{w_{\max}}{w_{\min}}$. To establish the tightness of this bound, consider the following instance. There are $n$
    agents where $x_1=x_2=\dots=x_{n-2}=0$, $x_{n-1}=\frac{1}{2}$, and $x_n=1$. The group structure is given by $G_1=\{1\}$, and $G_2=\{2,3,\dots,n\}$ with weights $w_1=w_{\min}$, and $w_2=w_{\max}$. We first identify that the optimal solution is $Y^*=(y^*_1=\frac{w_{\max}}{2(w_{\max}+w_{\min})},y^*_2=1)$, achieving an $\mge$ value of $\frac{w_{\min}\cdot w_{\max}}{2(w_{\min}+w_{\max})}$. In contrast, the \ENDPOINT mechanism attains an $\mge$ value of $\frac{w_{\max}}{2}$, implying that it has an approximation ratio of $1+\frac{w_{\max}}{w_{\min}}$.
\end{proof}

\begin{proposition}\label{prop:lower_bound_wmgc}
    Any deterministic, strategyproof mechanism has an approximation ratio of at least $1 + \frac{w_{\max}}{w_{\min}}$ when $E_j = w_j \cdot \max_{i \in G_j} c(f(\bftheta), x_i)$.
\end{proposition}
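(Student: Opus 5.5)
The plan is to reduce the statement to the characterization of \citet{FotakisT14}, as was done for Proposition~\ref{prop:lower_bound_wtgc}, and then evaluate the forced mechanism on the tight instance from the proof of Theorem~\ref{thm:endpoint_wmgc_upper_bound}. Like the paper's use of \citet{FotakisT14}, I take mechanisms to be anonymous (and $n \ge 5$). Fix a deterministic, anonymous, strategyproof mechanism $f$ with $k=2$ whose ratio for $\mge$ under $E_j = w_j \cdot \max_{i \in G_j} c(f(\bftheta), x_i)$ is bounded. Otherwise there is nothing to prove, since $1 + \frac{w_{\max}}{w_{\min}}$ is finite.

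\textbf{Step 1: bounded ratio forces exactness on two-location profiles.} If a profile has at most two distinct locations, putting one facility on each gives $\mge(\bftheta, Y^*) = 0$. A bounded ratio then forces $\mge(\bftheta, f(\bftheta)) = 0$. When all weights are positive, every agent must have cost zero, so $f$ places its two facilities exactly on the two occupied points. This property is all that the Fotakis--Tzamos argument extracts from a bounded ratio for the social or maximum cost. It depends only on which profiles have zero optimum, so it carries over verbatim to $\WMGC$ with any group structure.

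\textbf{Step 2: invoke the characterization.} With exactness on two-location profiles, anonymity and strategyproofness, the case analysis of \citet{FotakisT14} applies. Its moves shift agents one at a time between locations and use strategyproofness to pin down outputs. It concludes that $f$ coincides with the \ENDPOINT mechanism, or at least outputs $(x_1, x_n)$ on every profile of the form used below. I expect this step to be the main obstacle. Their proof is written for the unweighted objectives, so I must check that each step uses only Step 1 and strategyproofness, never specific objective values. Group memberships are public and weights enter only through Step 1, so I expect the check to go through. If some step needed a quantitative bound, I would first try to reproduce it on profiles where all relevant agents lie in a single group with weight $w_{\max}$, where $\WMGC$ is a scaled maximum cost.

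\textbf{Step 3: evaluate on the tight instance.} Take the instance from Theorem~\ref{thm:endpoint_wmgc_upper_bound}: $x_1 = \dots = x_{n-2} = 0$, $x_{n-1} = \frac12$, $x_n = 1$, with $G_1 = \{1\}$, $G_2 = \{2, \dots, n\}$, $w_1 = w_{\min}$ and $w_2 = w_{\max}$. By Step 2, $f$ outputs $(0, 1)$, so agent $n-1$ has cost $\frac12$ and $\mge = \frac{w_{\max}}{2}$. The optimum, computed via Lemma~\ref{lem:two_agents} applied to agent $1$ and agent $n-1$, is $\frac{w_{\min} w_{\max}}{2(w_{\min} + w_{\max})}$. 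The ratio is therefore exactly $1 + \frac{w_{\max}}{w_{\min}}$, which completes the lower bound.
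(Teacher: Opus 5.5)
Your route (bounded ratio, then the characterization of \citet{FotakisT14}, then evaluating the forced output on a bad instance) is the paper's, but Step 3 fails as written. You take $n\ge 5$, so agents $2,\dots,n-2$ sit at $0$ and belong to $G_2$ with weight $w_{\max}$. Since agent $1$ is co-located with them, $\mge$ on this instance equals $w_{\max}$ times the maximum cost over the three points $0,\frac12,1$. Applying \Cref{lem:two_agents} to agents $1$ and $n-1$ ignores these heavy agents. At the placement $\bigl(\frac{w_{\max}}{2(w_{\max}+w_{\min})},1\bigr)$, agent $2$ alone contributes $\frac{w_{\max}^2}{2(w_{\max}+w_{\min})}$. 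The true optimum is $\frac{w_{\max}}{4}$ (facilities at $\frac14$ and $1$), so the ratio of \ENDPOINT{} on this instance is only $2$. The tightness example of \Cref{thm:endpoint_wmgc_upper_bound}, which the paper's proof also reuses, has the same slip and is correct only for $n=3$. The repair is easy: move agents $2,\dots,n-2$ to $1$ (or into $G_1$). Then \ENDPOINT{} still outputs $(0,1)$ and pays $\frac{w_{\max}}{2}$ at the agent located at $\frac12$. For the optimum, observe that some facility is nearest to two of the three occupied points. Serving $\{\frac12,1\}$ together costs at least $\frac{w_{\max}}{4}$, and serving $\{0,1\}$ at least $\frac{w_{\min}w_{\max}}{w_{\min}+w_{\max}}$. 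Serving $\{0,\frac12\}$ costs at least $\frac{w_{\min}w_{\max}}{2(w_{\min}+w_{\max})}$ by \Cref{lem:two_agents}. The last bound is the smallest and is attained by $\bigl(\frac{w_{\max}}{2(w_{\max}+w_{\min})},1\bigr)$.

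Two further gaps concern scope and Step 2. The statement covers all deterministic strategyproof mechanisms, not only anonymous ones. For $n\ge 5$, the characterization of \citet{FotakisT14} leaves either \ENDPOINT{} or a mechanism with a unique dictator. The paper's proof explicitly handles the dictatorial branch: it chooses the instance so that the dictator sits at an endpoint and the mechanism again outputs the two extreme points. By assuming anonymity you prove a strictly weaker claim. Step 2 also should not rest on your assertion that exactness on two-location profiles is all \citet{FotakisT14} extract from a bounded ratio. That assertion is doubtful, since their well-separated-instance arguments use the ratio quantitatively, and it is unnecessary. Instead, fix the public group profile of your instance, which turns $f$ into a location-only strategyproof mechanism. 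Since every agent lies in some group and $w_{\min}>0$, we have $w_{\min}\max_{i\in N}c(Y,x_i)\le\mge(\bftheta,Y)\le w_{\max}\max_{i\in N}c(Y,x_i)$. So a bounded $\WMGC$ ratio yields a bounded maximum-cost ratio, hence a bounded social-cost ratio (losing a factor $n$), and the characterization applies as a black box.
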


\section{Conclusion and Discussion}\label{sec:conclusion}
We study facility location games through the lens of fairness by introducing a unified framework based on the \emph{maximum group effect}, a general metric that encompasses a broad class of classical objectives. In the single-facility setting, we develop two strategyproof mechanisms, \BALANCED and \MajorPhantom, both of which achieve tight approximation guarantees for minimizing the maximum group effect. Our results recover the known tight bounds for the classical group-fair objectives as special cases and extend these guarantees to weighted and overlapping group structures. In the two-facility setting, we revisit the classical \ENDPOINT mechanism and establish tight approximation bounds. Looking forward, promising research directions include extending our framework to randomized mechanisms to circumvent the impossibility of achieving bounded approximations for $k\geq 3$, as well as adapting the $\mge$ objective to higher dimensional metric spaces.

\section*{Acknowledgements}
This work was supported by the NSF-CSIRO grant on “Fair Sequential Collective Decision-Making" (RG230833) and the ARC Laureate Project FL200100204 on “Trustworthy AI”. The authors would like to express their gratitude to the anonymous reviewers of AAAI 2026 for their insightful and constructive feedback, which greatly helped improve this paper.

\bibliographystyle{plainnat}
\bibliography{ref}

@inproceedings{ZLC22a,
  title={Strategyproof Mechanisms for Group-Fair Facility Location Problems},
  author={Zhou, Houyu and Li, Miniming and Chan, Hau},
  booktitle={Proceedings of the 31st International Joint Conference on Artificial Intelligence and the 25th European Conference on Artificial Intelligence (IJCAI)},
  pages={613--619},
  year={2022}
}

@inproceedings{ProcacciaT09,
  author       = {Ariel D. Procaccia and
                  Moshe Tennenholtz},
  title        = {Approximate mechanism design without money},
  booktitle    = {Proceedings of the 10th {ACM} Conference on Electronic Commerce (EC)},
  pages        = {177--186},
  year         = {2009}
}

@inproceedings{CFL+21a,
  author       = {Hau Chan and
                  Aris Filos{-}Ratsikas and
                  Bo Li and
                  Minming Li and
                  Chenhao Wang},
  title        = {Mechanism Design for Facility Location Problems: {A} Survey},
  booktitle    = {Proceedings of the 30th International Joint Conference on Artificial Intelligence (IJCAI)},
  pages        = {4356--4365},
  year         = {2021}
}

@article{MarshS94,
  title={Equity measurement in facility location analysis: A review and framework},
  author={Marsh, Michael T and Schilling, David A},
  journal={European journal of operational research},
  volume={74},
  number={1},
  pages={1--17},
  year={1994}
}

@inproceedings{Toby25,
  title     = {Equitable Mechanism Design for Facility Location},
  author    = {Walsh, Toby},
  booktitle = {Proceedings of the 34th International Joint Conference on
               Artificial Intelligence, (IJCAI)},
  pages     = {275--283},
  year      = {2025},
}

@article{Rawls1958,
  title={Justice as fairness},
  author={Rawls, John},
  journal={The philosophical review},
  volume={67},
  number={2},
  pages={164--194},
  year={1958}
}

@article{FotakisT14,
  author       = {Dimitris Fotakis and
                  Christos Tzamos},
  title        = {On the Power of Deterministic Mechanisms for Facility Location Games},
  journal      = {{ACM} Transactions on Economics and Computation},
  volume       = {2},
  number       = {4},
  pages        = {15:1--15:37},
  year         = {2014}
}

@inproceedings{cai2016envy,
author = {Cai, Qingpeng and Filos-Ratsikas, Aris and Tang, Pingzhong},
title = {Facility location with minimax envy},
year = {2016},
booktitle = {Proceedings of the 25th International Joint Conference on Artificial Intelligence (IJCAI)},
pages = {137–143}
}

@article{ALL+25a,
title = {Proportionality-based fairness and strategyproofness in the facility location problem},
journal = {Journal of Mathematical Economics},
volume = {119},
pages = {103129},
year = {2025},
author = {Haris Aziz and Alexander Lam and Barton E. Lee and Toby Walsh}
}

@article{McAl76a,
  title={Equity and efficiency in public facility location},
  author={McAllister, Donald M},
  journal={Geographical analysis},
  volume={8},
  number={1},
  pages={47--63},
  year={1976},
  publisher={Wiley Online Library}
}

@article{CFL+22a,
  title={Strategyproof mechanisms for 2-facility location games with minimax envy},
  author={Chen, Xin and Fang, Qizhi and Liu, Wenjing and Ding, Yuan and Nong, Qingqin},
  journal={Journal of Combinatorial Optimization},
  volume={43},
  number={5},
  pages={1628--1644},
  year={2022}
}

@inproceedings{LAL+24a,
  title={Proportional fairness in obnoxious facility location},
  author={Lam, Alexander and Aziz, Haris and Li, Bo and Ramezani, Fahimeh and Walsh, Toby},
  booktitle={Proceedings of the 23rd International Conference on Autonomous Agents and Multiagent Systems (AAMAS)},
  pages={1075--1083},
  year={2024}
}

@inproceedings{liu2020envyratio,
author = {Liu, Wenjing and Ding, Yuan and Chen, Xin and Fang, Qizhi and Nong, Qingqin},
title = {Multiple Facility Location Games with Envy Ratio},
year = {2020},
booktitle = {Proceedings of the 14th International Conference on Algorithmic Aspects in Information and Management (AAIM)},
pages = {248–259}
}

@article{DLC+20a,
title = {Facility location game with envy ratio},
journal = {Computers \& Industrial Engineering},
volume = {148},
pages = {106710},
year = {2020},
author = {Yuan Ding and Wenjing Liu and Xin Chen and Qizhi Fang and Qingqin Nong},
}

@article{Moulin80a,
  title={On strategy-proofness and single peakedness},
  author={Moulin, Herv{\'e}},
  journal={Public Choice},
  volume={35},
  number={4},
  pages={437--455},
  year={1980},
  publisher={Springer}
}

@inproceedings{LiLC24a,
  author       = {Jiaqian Li and
                  Minming Li and
                  Hau Chan},
  title        = {Strategyproof Mechanisms for Group-Fair Obnoxious Facility Location
                  Problems},
  booktitle    = {Proceedings of the 38th {AAAI} Conference on Artificial Intelligence, (AAAI)},
  pages        = {9832--9839},
  year         = {2024},
}

@inproceedings{AzizCSWX25,
  author       = {Haris Aziz and
                  Hau Chan and
                  Xingchen Sha and
                  Toby Walsh and
                  Lirong Xia},
  title        = {Group Fairness in Multi-period Mobile Facility Location Problems},
  booktitle    = {Proceedings of the 24th International Conference on Autonomous Agents and Multiagent Systems, (AAMAS)},
  pages        = {2414--2416},
  year         = {2025}
}

@inproceedings{ZhouCL24,
  author       = {Houyu Zhou and
                  Hau Chan and
                  Minming Li},
  title        = {Altruism in Facility Location Problems},
  booktitle    = {Thirty-Eighth {AAAI} Conference on Artificial Intelligence, (AAAI)},
  pages        = {9993--10001},
  year         = {2024}
}

@inproceedings{LiPWZ25,
  author       = {Minming Li and
                  Cheng Peng and
                  Ying Wang and
                  Houyu Zhou},
  title        = {Group-fair Facility Location Games with Externalities},
  booktitle    = {Proceedings of the 24th International Conference on Autonomous Agents and Multiagent Systems, (AAMAS), 2025},
  pages        = {2615--2617},
  year         = {2025}
}
\clearpage
\appendix
\section{Comparative Analysis}\label{sec:comparative}
In this section, we revisit three classical mechanisms introduced by \citet{ProcacciaT09} and \citet{ZLC22a}: the \MED mechanism, the \LEFTMOST mechanism, and the Majority Group Median (\MAJOR) Mechanism. We then analyze the approximation ratios of these mechanisms under the maximum group effect objective, showing that our proposed mechanisms, \BALANCED and \MajorPhantom, achieve superior performance in approximating optimum for both the $\WTGC$ and $\WMGC$ objectives. A summary of the comparative results is presented in \Cref{tab:compare}.
\begin{table}[htb]
\centering
\begin{NiceTabular}{@{}ccc@{}}
\toprule
\textbf{Objectives} & \textbf{Mechanisms} & \textbf{Bounds} \\
\midrule
\multirow{4}{*}{$\WTGC$} & \MED & $1+(m-1)\cdot \frac{w_{\max}}{w_{\min}}$ \\
 & \LEFTMOST & $1+(n-1)\cdot \frac{w_{\max}}{w_{\min}}$ \\
 & \MAJOR & $1+2\cdot \frac{w_{\max}}{w_{\min}}$\\
 & \cellcolor{gray!20}\BALANCED & \cellcolor{gray!20}$\mathbf{2}$\\ 
 \hline
\multirow{4}{*}{$\WMGC$} & \MED & \multirow{3}{*}{$1+\frac{w_{\max}}{w_{\min}}$} \\
 & \LEFTMOST & \\
 & \MAJOR & \\
 & \cellcolor{gray!20}\MajorPhantom & \cellcolor{gray!20}$\mathbf{2}$\\ 
\bottomrule
\end{NiceTabular}
\caption{Summary of the comparative analysis}
\label{tab:compare}
\end{table}

The \MED mechanism and \LEFTMOST mechanism are the fundamental mechanisms studied by \citet{ProcacciaT09}. Given any agent profile $\bftheta$, without loss of generality, we assume that $x_1 \leq x_2 \leq \dots \leq x_n$. The \MED mechanism places the facility at $f(\bftheta)=x_{\lceil \frac{n}{2} \rceil}$ and the \LEFTMOST mechanism places the facility at $f(\bftheta)=x_1$. 
Regarding the Majority Group Median (\MAJOR) Mechanism by \citet{ZLC22a}. It places the facility at the median point among the largest-size group agents. 

\begin{algorithm}[!htbp]
\caption{\MAJOR \textbf{Mechanism}}
\label{algo:major_group}
\begin{algorithmic}[1] 
\REQUIRE Agent profile $\bftheta$, group weights $\{w_j\}_{j\in [m]}$.
\STATE Let $G_{\ell}$ denote the largest size group and $\bfx_\ell= \{x^\ell_1, \dots, x^\ell_{|G_{\ell}|}\}$ be the location profile of agents, tie-breaking in favor of the smallest index. 
\STATE Locate the facility $f(\bftheta)$ at the median point of multiset $\bfx=\{x^\ell_1, \dots, x^\ell_{|G_{\ell}|}\}$, tie-breaking by select the leftmost.
\ENSURE Facility location $f(\bftheta)$.
\end{algorithmic}
\end{algorithm}

We next study the approximation performance for these three mechanisms in terms of both the weighted total group cost ($\WTGC$) objective and the weighted maximum group cost ($\WMGC$) objective when placing one single facility $(k=1)$.

\subsection{Weighted Total Group Cost}
We begin by considering the weighted group cost $(\WTGC)$ objective, defined as $ E_j = w_j \cdot \sum_{i \in G_j} c(f(\mathbf{\theta}), x_i)$. We slightly abuse notation by using the functions $L_j(y)$ and $R_j(y)$ to denote the number of agents located to the left of any location $y$ and to the right of $y$ for any group $G_j\in \mathcal{G}$, respectively. Throughout, let $w_{\max}$ and $w_{\min}$ denotes the maximum weight and the minimum group weights.

\begin{proposition}\label{prop:med_total}
    The \MED mechanism has an approximation ratio of $1+(m-1)\cdot \frac{w_{\max}}{w_{\min}}$ for minimizing $\mge$ when $E_j = w_j \cdot \sum_{i \in G_j} c(f(\bftheta), x_i)$.
\end{proposition}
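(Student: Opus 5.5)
The plan is to prove the upper bound with the same reduction and derivative argument as in the proof of \Cref{thm:balanced_main_theorem}, and then to give a matching instance. Throughout, I assume every agent belongs to at least one group. Let $f(\bftheta)=x_{\lceil n/2\rceil}$ and let $y^*$ be optimal. As in that proof, I first move every agent lying strictly between $f(\bftheta)$ and $y^*$ to $y^*$. This move does not decrease the mechanism's cost and does not increase the optimal cost, so the ratio does not decrease.

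One check is needed for the reduction to apply. The agents that move end up on the far side of $f(\bftheta)$, so \MED\ might output a different point afterwards. I will argue with counts taken relative to the original $f(\bftheta)$, which are all the argument needs. After the reduction, no agent lies strictly between the two points. Hence $E_j(f(\bftheta))-E_j(y^*)=w_j(R_j-L_j)\,|y^*-f(\bftheta)|$, where $R_j,L_j$ are evaluated at the appropriate endpoint.

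\textbf{Case $f(\bftheta)\le y^*$.} Here $E_j(f(\bftheta))-E_j(y^*)=w_j(R_j(f)-L_j(f))(y^*-f)$, and $\mge(\bftheta,y^*)\ge w_{\min}\max_{j}L_j(f)\,(y^*-f)$. The key combinatorial step is the bound
\[
R_j(f)-L_j(f)\le (m-1)\max_{j'}L_{j'}(f)\quad\text{for every } j.
\]
To prove it, write $N_L$ for the number of agents at or left of $f$ and $N_R$ for the number strictly to its right. Since $f$ is the median, $R_j(f)\le N_R\le N_L$. Every agent in $N_L$ either lies in $G_j$ or lies in one of the other $m-1$ groups. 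Hence $N_L\le L_j(f)+\sum_{j'\ne j}L_{j'}(f)\le L_j(f)+(m-1)\max_{j'}L_{j'}(f)$.

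Taking the maximum over $j$ then gives
\[
\mge(\bftheta,f(\bftheta))-\mge(\bftheta,y^*)\le w_{\max}(m-1)\max_{j'}L_{j'}(f)\,(y^*-f).
\]
Dividing by the lower bound on $\mge(\bftheta,y^*)$ yields a ratio of at most $1+(m-1)\frac{w_{\max}}{w_{\min}}$.

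\textbf{Case $f(\bftheta)>y^*$.} This case is symmetric. The difference is $w_j(L_j(y^*)-R_j(y^*))(f-y^*)$ and $\mge(\bftheta,y^*)\ge w_{\min}\max_j R_j(y^*)(f-y^*)$, because every agent right of $y^*$ lies at or beyond $f$. The counting step now uses the number of agents at or left of $y^*$, which is at most $\lceil n/2\rceil-1\le\lfloor n/2\rfloor$. It is therefore at most the number of agents to the right of $y^*$, which is at least $n-\lceil n/2\rceil+1$. Charging non-$G_j$ right-side agents to the other $m-1$ groups then gives the same bound.

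I expect this counting step to be the main obstacle. It is where overlapping memberships and the leftmost tie-breaking must be handled carefully; in particular, an agent belonging to several groups can only help the inequality, since it is counted at least once.

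\textbf{Tightness.} Place $m-1$ agents at $0$, each forming its own singleton group $G_2,\dots,G_m$ of weight $w_{\min}$. Place $m-1$ agents at $1$, all in $G_1$ of weight $w_{\max}$. Then $n=2(m-1)$ and \MED\ outputs $x_{m-1}=0$, with cost $w_{\max}(m-1)$. The optimum balances $w_{\min}y=w_{\max}(m-1)(1-y)$, giving value $\frac{w_{\min}w_{\max}(m-1)}{w_{\min}+w_{\max}(m-1)}$. The resulting ratio is exactly $1+(m-1)\frac{w_{\max}}{w_{\min}}$.
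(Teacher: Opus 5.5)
Your proposal is correct and follows essentially the same route as the paper: the same relocation reduction, the same linear identity for $E_j(f(\bftheta))-E_j(y^*)$, the same lower bound on $\mge(\bftheta,y^*)$ from the agents on the near side of $f(\bftheta)$, the same pigeonhole counting over the $m$ groups, and the identical tight instance. Your single inequality $R_j(f)-L_j(f)\le (m-1)\max_{j'}L_{j'}(f)$ merges the paper's three subcases (whether $L_g(f(\bftheta))=0$, and whether $g$ attains $\max_j L_j(f(\bftheta))$) into one step. Your explicit treatment of $f(\bftheta)>y^*$ also covers the case the paper dismisses as ``without loss of generality,'' even though left-median tie-breaking makes that case not literally symmetric.
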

\begin{proof}
Consider an arbitrary profile $\bftheta$. Let $f(\bftheta)$ be the output of the \MED mechanism and $y^*$ be the optimal facility location. Without loss of generality, assume that $f(\bftheta) < y^*$ and that $\mge(\bftheta, f(\bftheta))$ is attained by some group $G_g$. We first claim that, the value of $\mge(\bftheta, f(\bftheta))$ is non-decreasing, while $\mge(\bftheta, y^*)$ is non-increasing when all the agents located between $f(\bftheta)$ and $y^*$ are relocated to $y^*$. Consequently, the approximation ratio of $f(\bftheta)$ under $\bftheta$ is upper-bounded by that of the resulting instance after the relocation. Henceforth, we restrict attention to instances in which no agent is located strictly between $f(\bftheta)$ and $y^*$.

Note that $\mge(\bftheta, f(\bftheta))=w_{g}\sum_{i \in G_{g}}d(x_i,f(\bftheta))$ and by the definition $\mge(\bftheta, y^*)\geq w_{g}\sum_{i \in G_{g}} d(x_i,y^*)$. We derive the difference between the two terms by
\begin{align*}
\mge(\bftheta, f(\bftheta))-\mge(\bftheta, y^*)
 & \leq w_{g}\sum_{i \in G_{g}}d(x_i,f(\bftheta))-w_{g}\sum_{i \in G_{g}} d(x_i,y^*) \\
 & \le w_g(|R_{g}(f(\bftheta))|-|L_{g}(f(\bftheta))|)(y^*-f(\bftheta)).
\end{align*}
Observe that the $\mge$ objective is at least the sum of the costs of agents in the same group that are positioned to the left of $f(\bftheta)$. We have
\begin{align*}
 & \mge(\bftheta, y^*) \geq \max_{j\in [m]} w_j|L_j(f(\bftheta))| (y^*-f(\bftheta)).
\end{align*}

We next discuss by cases.
\begin{enumerate}
    \item \textbf{When} $L_{g}(f(\bftheta))=0$, there are at most $m-1$ groups on the left of $f(\bftheta)$. By the definition of the \MED mechanism, there are at least $\frac{n}2$ agents on the left of $f(\bftheta)$ and at most $\frac{n}{2}$ agents on the right of $f(\bftheta)$. Hence, we have $\max_{j\in [m]} |L_j(f(\bftheta))|\geq \frac{n}{2(m-1)}$ and $|R_g(f(\bftheta))| \le \frac{n}{2}$. Therefore, we can bound the approximation ratio by
    \begin{align*}
            \rho& =\frac{\mge(\bftheta, f(\bftheta))}{\mge(\bftheta, y^*)}\\
            & \leq \frac{\mge(\bftheta, y^*)+w_g\cdot (|R_{g}(f(\bftheta))|-|L_g(f(\bftheta)|)(y^*-f(\bftheta))}{\mge(\bftheta, y^*)}\\
            &\leq \frac{\mge(\bftheta, y^*)+w_g\cdot |R_{g}(f(\bftheta))|(y^*-f(\bftheta))}{\mge(\bftheta, y^*)}\\
            &\leq 1+\frac{w_g\cdot |R_{g}(f(\bftheta))|(y^*-f(\bftheta))}{\max_{j\in [m]} w_j \cdot |L_j(f(\bftheta))| (y^*-f(\bftheta))}\\
            & \leq 1+\frac{w_g\cdot \frac{n}{2}}{\max_{j\in [m]} w_j \cdot \frac{n}{2(m-1)}}\\
            &\leq 1 + \frac{w_g\cdot (m-1)}{\max_{j\in [m]}w_j}\\
            & \leq 1+(m-1)\cdot \frac{w_{\max}}{w_{\min}}.
    \end{align*}
    
   \item \textbf{When} $L_{g}(f(\bftheta))>0$, we first rewrite the approximation ratio by  
    \begin{align*}
    \rho = & \frac{\mge(\bftheta, f(\bftheta))}{\mge(\bftheta, y^*)} \\
    \leq & \frac{\mge(\bftheta, y^*)+w_g(|R_{g}(f(\bftheta))|-|L_{g}(f(\bftheta))|)(y^*-f(\bftheta))}{\mge(\bftheta, y^*)} \\ 
    \leq &1+\frac{w_g(|R_{g}(f(\bftheta))|-|L_{g}(f(\bftheta))|)(y^*-f(\bftheta))}{\max_{j\in [m]}w_j |L_j(f(\bftheta))| (y^*-f(\bftheta)).} 
    \end{align*}
    After that, we further consider two sub-cases by discussing whether the group $G_g$ achieves $\max_{j\in [m]}|L_j(f(\bftheta))|$. 
    \begin{enumerate}
        \item \textbf{When $\max_{j\in [m]} |L_j(f(\bftheta))|$ is achieved by group $g$.} By the definition of the \MED mechanism, there are at least $\frac{n}{2}$ agents on the left of $f(\bftheta)$ and at most $\frac{n}{2}$ agents on the right of $f(\bftheta)$. Since there are at most $m$ groups, by the pigeonhole principle,  we have $|L_{g}(f(\bftheta))| \geq \frac{\frac{n}{2}}{m}$. We have the approximation ratio
    \begin{align*}
    \rho & \leq 1+\frac{w_g(|R_{g}(f(\bftheta))|-|L_{g}(f(\bftheta))|)(y^*-f(\bftheta))}{w_g |L_j(f(\bftheta))| (y^*-f(\bftheta))}\\ 
    & =1+\frac{\left(|R_{g}(f(\bftheta))|-|L_{g}(f(\bftheta))|\right)}{|L_{g}(f(\bftheta))|}\\
    & =\frac{|R_{g}(f(\bftheta))|}{|L_{g}(f(\bftheta))|}\leq \frac{\frac{n}{2}}{\frac{\frac{n}{2}}{m}}=m.
     \end{align*}
    \item \textbf{When $\max_{j\in [m]} |L_j(f(\bftheta))|$ is not achieved by group $g$.} By the pigeonhole principle, we derive that
    \begin{align*}
        \max_{j\in [m]} |L_j(f(\bftheta))|\geq \frac{\frac{n}{2}-|L_{g}(f(\bftheta))|}{m-1}.
    \end{align*}
    With the inequality in hand, we upper-bound the approximation ratio by
    \begin{align*}
    \rho & \leq 1+\frac{w_g(|R_{g}(f(\bftheta))|-|L_{g}(f(\bftheta))|)(y^*-f(\bftheta))}{\max_{j\in [m]} w_j |L_j(f(\bftheta))| (y^*-f(\bftheta))}\\ 
    & \leq 1+\frac{w_g\left(|R_{g}(f(\bftheta))|-|L_{g}(f(\bftheta))|\right)}{\max_{j\in [m]} w_j\frac{\frac{n}{2}-|L_{g}(f(\bftheta))|}{m-1} }\\ 
    & \leq 1+\frac{w_{\max}}{w_{\min}}\cdot\frac{\frac{n}{2}-|L_{g}(f(\bftheta))|}{\frac{\frac{n}{2}-|L_{g}(f(\bftheta))|}{m-1} }\\
    & \leq 1+(m-1)\cdot \frac{w_{\max}}{w_{\min}}.
    \end{align*}
    \end{enumerate}
\end{enumerate}
Lastly, we provide an example to show that our analysis for the approximation ratio of \MED mechanism is tight. Consider an instance where $m-1$ agents from different groups $G_2,\dots, G_m$ with weight $w_{\min}$ are located at $0$ and the other $m-1$ agents in $G_1$ with weight $w_{\max}$ are located at $1$. The \MED mechanism locates the facility at $0$ achieving an $\mge$ of $(m-1)\cdot w_{\max}$, while locating the facility at $\frac{(m-1)\cdot w_{\max}}{(m-1)\cdot w_{\max}+w_{\min}}$ can achieve the optimal $\mge$ of $\frac{(m-1)\cdot w_{\max}\cdot w_{\min}}{(m-1)\cdot w_{\max}+w_{\min}}$. Hence the approximation ratio of the \MED mechanism is at least $1+(m-1)\cdot \frac{w_{\max}}{w_{\min}}$, which completes the proof.
\end{proof}

We next prove that the \LEFTMOST mechanism achieves an approximation ratio of $1+(n-1)\cdot \frac{w_{\max}}{w_{\min}}$ for minimizing the $\mge$ objective when $E_j = w_j \cdot \sum_{i \in G_j} c(f(\bftheta), x_i)$.
\begin{proposition}
    The \LEFTMOST mechanism has an approximation ratio of $1+(n-1)\cdot \frac{w_{\max}}{w_{\min}}$ for minimizing $\mge$ when $E_j = w_j \cdot \sum_{i \in G_j} c(f(\bftheta), x_i)$.
\end{proposition}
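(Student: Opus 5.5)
The argument follows the template of \Cref{prop:med_total}: prove the upper bound by the same derivative argument, then give a matching instance. Let $f(\bftheta)=x_1$ and let $y^*$ be optimal; since $x_1$ is the leftmost agent, $y^*\ge x_1$. As in the proof of \Cref{thm:balanced_main_theorem}, first relocate every agent lying strictly between $x_1$ and $y^*$ to $y^*$. This can only increase $\mge(\bftheta,x_1)$ and only decrease $\mge(\bftheta,y^*)$, so the ratio does not decrease. Afterwards no agent lies in $(x_1,y^*)$.

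\textbf{Upper bound.} Suppose $\mge(\bftheta,x_1)$ is attained by group $G_g$. Because $E_g$ has constant slope $w_g(R_g(x_1)-L_g(x_1))$ on $[x_1,y^*)$,
\begin{align*}
\mge(\bftheta,x_1)-\mge(\bftheta,y^*)\le w_g\bigl(R_g(x_1)-L_g(x_1)\bigr)(y^*-x_1)\le w_{\max}(n-1)(y^*-x_1).
\end{align*}
The last inequality uses that agent $1$ lies at $x_1$, so at most $n-1$ agents of $G_g$ are to the right of $x_1$. (If agent $1\in G_g$, the bound even improves to $n-2$, but $n-1$ suffices.)

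For the lower bound on the optimum, agent $1$ belongs to some group $j\in g_1$ and has distance $y^*-x_1$ from $y^*$. Hence
\begin{align*}
\mge(\bftheta,y^*)\ge w_j(y^*-x_1)\ge w_{\min}(y^*-x_1).
\end{align*}
Dividing the two displays gives $\rho\le 1+(n-1)\frac{w_{\max}}{w_{\min}}$. If $y^*=x_1$ the ratio is $1$ and there is nothing to prove. One degenerate point needs care: if $w_{\min}=0$, the bound is vacuous anyway.

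\textbf{Tightness.} Mirror the \ENDPOINT tightness instance. Take $x_1=0$ with $G_1=\{1\}$ and $w_1=w_{\min}$. Place agents $2,\dots,n$ at $1$, all in $G_2$ with $w_2=w_{\max}$. Then \LEFTMOST yields $\mge=(n-1)w_{\max}$. The optimum balances $w_{\min}y=(n-1)w_{\max}(1-y)$, giving
\begin{align*}
y^*=\frac{(n-1)w_{\max}}{(n-1)w_{\max}+w_{\min}},\qquad \mge(\bftheta,y^*)=\frac{(n-1)w_{\max}w_{\min}}{(n-1)w_{\max}+w_{\min}}.
\end{align*}
The ratio is therefore exactly $1+(n-1)\frac{w_{\max}}{w_{\min}}$.

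The only step needing care is the relocation reduction, specifically checking that agent $1$ is never moved. This holds because only agents strictly between $x_1$ and $y^*$ are relocated, so the lower bound via agent $1$ survives. Everything else is a routine adaptation of the preceding proofs.
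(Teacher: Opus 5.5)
Your proof is correct and follows the paper's argument essentially step for step. Both use the same relocation reduction, the same constant-slope bound $w_g\bigl(R_g(x_1)-L_g(x_1)\bigr)(y^*-x_1)\le w_{\max}(n-1)(y^*-x_1)$, and the identical tightness instance. The only cosmetic difference is that you lower-bound the optimum directly through agent $1$'s group, while the paper writes $\mge(\bftheta,y^*)\ge \max_j w_j L_j(x_1)(y^*-x_1)$ and then uses $\max_j L_j(x_1)\ge 1$, which amounts to the same thing.
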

\begin{proof}
Consider any profile $\bftheta$. Let $f(\bftheta)$ be the output of the \LEFTMOST mechanism and $y^*$ be the optimal facility location that minimizes $\mge$ objective. Without loss of generality, assume that $\mge(\bftheta, f(\bftheta))$ is achieved by group $G_g$. By definition of the \LEFTMOST mechanism, we first have $f(\bftheta) \leq y^*$. We use the analogous relocation approach as that in the proof of \Cref{prop:med_total} and focus on the instances after relocation. Note that $\mge(\bftheta, f(\bftheta))=w_g\sum_{i \in G_{g}}d(x_i,f(\bftheta))$ and $ \mge(\bftheta, y^*) \ge w_g \sum_{i \in G_{g}} d(x_i,y^*).$ Similar to the proof of \Cref{prop:med_total}, we have
\begin{align*}
 \mge(\bftheta, f(\bftheta))-\mge(\bftheta, y^*)
 \le &~ w_{g}(|R_{g}(f(\bftheta))|-|L_{g}(f(\bftheta))|)(y^*-f(\bftheta)),
\end{align*}
and since there are $|L_j(f(\bftheta))|$ agents in $G_j$ at or on the left of $f(\bftheta)$, we have
\begin{align*}
     \mge(\bftheta, y^*) \geq \max_{j\in [m]} w_j |L_j(f(\bftheta))| (y^*-f(\bftheta)).
\end{align*}
So the approximation ratio is bounded by
\begin{align*}
    \rho=&\frac{\mge(\bftheta, f(\bftheta))}{\mge(\bftheta, y^*)}\\
    \le & 1+\frac{w_g(|R_{g}(f(\bftheta))|-|L_{g}(f(\bftheta))|)(y^*-f(\bftheta))}{\max_{j} w_j |L_j(f(\bftheta))| (y^*-f(\bftheta))}\\
     =& 1+\frac{w_{\max}}{w_{\min}}\cdot \frac{|R_{g}(f(\bftheta))|-|L_{g}(f(\bftheta))|}{\max_{j} |L_j(f(\bftheta))| }.
\end{align*}
Notice that $\max_{j\in [m]} |L_j(y)|\geq 1$ and $|R_{g}(y)|-|L_{g}(f(\bftheta))|\leq n-1$. The approximation ratio is further bounded by 
\begin{align*}
    \rho & = 1+\frac{w_{\max}}{w_{\min}}\cdot \frac{|R_{g}(f(\bftheta))|-|L_{g}(f(\bftheta))|}{\max_{j} |L_j(f(\bftheta))| }\\
    &\leq 1+(n-1)\cdot \frac{w_{\max}}{w_{\min}}.
\end{align*}

The following example shows that our analysis for the approximation ratio of the \LEFTMOST mechanism is tight. Consider an instance with two groups where one agent in $G_1$ with $w_1=w_{\min}$ at $0$, and $n-1$ agents in $G_2$ with $w_2=w_{\max}$ at $1$. The optimal $\mge$ solution locates the facility at $\frac{(n-1)\cdot w_{\max}}{(n-1)\cdot w_{\max}+w_{\min}}$, achieving an $\mge$ of $\frac{(n-1)w_{\max}\cdot w_{\min}}{(n-1)w_{\max}+w_{\min}}$, and the \LEFTMOST mechanism locates the facility at $0$, achieving an $\mge$ of $(n-1)\cdot w_{\max}$. Hence, the approximate ratio of the \LEFTMOST mechanism is at least $1+(n-1)\cdot \frac{w_{\max}}{w_{\min}}$.
\end{proof}

We next turn our attention to the \MAJOR mechanism which was proposed by \citet{ZLC22a}. Under the $\mge$ objective, they showed that when $w_j = 1$ (i.e., maximum total group cost, $\mtgc$) or $w_j = \frac{1}{|G_j|}$ (i.e., maximum average group cost, $\magc$), the \MAJOR mechanism achieves an approximation ratio of $3$ in both cases. However, we show that the \MAJOR mechanism is not robust: when arbitrary group weights $w_j$ are allowed under the $\mge$ objective, it no longer guarantees a $3$-approximation.
\begin{proposition}
    The \MAJOR mechanism has an approximation ratio of $1 + 2\cdot \frac{w_{\max}}{w_{\min}}$ for minimizing $\mge$ objective when $E_j = w_j \cdot \sum_{i \in G_j} c(f(\bftheta), x_i)$.
\end{proposition}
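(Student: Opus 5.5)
The plan follows the template used for \MED and \LEFTMOST in \Cref{prop:med_total}. Let $f(\bftheta)$ be the \MAJOR output, i.e., the (left) median of the largest group $G_\ell$, and let $y^*$ be optimal. By symmetry, assume $f(\bftheta)<y^*$. Relocate every agent strictly between $f(\bftheta)$ and $y^*$ to $y^*$. This cannot decrease the ratio, and it does not move the median of $G_\ell$, since it only moves agents to the right of $f(\bftheta)$ further right. Let $G_g$ attain $\mge(\bftheta,f(\bftheta))$. Exactly as in the \BALANCED analysis, I get
\begin{align*}
\mge(\bftheta,f(\bftheta))-\mge(\bftheta,y^*)\le w_g\bigl(R_g(f(\bftheta))-L_g(f(\bftheta))\bigr)(y^*-f(\bftheta)),
\end{align*}
together with $\mge(\bftheta,y^*)\ge \max_j w_j L_j(f(\bftheta))\,(y^*-f(\bftheta))$.

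The key counting step uses the majority structure. Since $f(\bftheta)$ is the left median of $G_\ell$, we have $L_\ell(f(\bftheta))\ge \lceil |G_\ell|/2\rceil$. Since $G_\ell$ is the largest group, $R_g(f(\bftheta))\le |G_g|\le |G_\ell|\le 2L_\ell(f(\bftheta))$. Therefore
\begin{align*}
\rho\le 1+\frac{w_g R_g(f(\bftheta))}{w_\ell L_\ell(f(\bftheta))}\le 1+2\frac{w_{\max}}{w_{\min}}.
\end{align*}
In the symmetric case $f(\bftheta)>y^*$, I lower-bound $\mge(\bftheta,y^*)$ by $w_\ell R_\ell(y^*)(f(\bftheta)-y^*)$, as in Case 2 of the \BALANCED proof. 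After relocation, the agents to the right of $y^*$ are exactly those at or right of $f(\bftheta)$, so $R_\ell(y^*)\ge\lceil |G_\ell|/2\rceil$ because of the leftmost tie-break. Since $L_g(y^*)\le |G_g|\le |G_\ell|$, the same bound follows.

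For tightness, I would take two groups. Group $G_1$ has weight $w_{\min}$ and $2k+1$ agents: $k+1$ at $0$ and $k$ at $1$, so its median is $0$ and \MAJOR places the facility at $0$. Group $G_2$ has weight $w_{\max}$ and $2k+1$ agents at $1$ (ties broken toward $G_1$, or $G_2$ made one agent smaller). Then the mechanism's $\mge$ is $\approx w_{\max}(2k+1)$. The optimum places $y$ near $1$ so that $w_{\min}(k+1)y \approx w_{\max}(2k+1)(1-y)$, giving a value of about $\frac{w_{\min}(k+1)\,w_{\max}(2k+1)}{w_{\min}(k+1)+w_{\max}(2k+1)}$. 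The resulting ratio is $1+\frac{w_{\max}(2k+1)}{w_{\min}(k+1)}\to 1+2\frac{w_{\max}}{w_{\min}}$ as $k\to\infty$.

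The main obstacle is this tightness instance. The $G_1$ agents at $1$ contribute to $G_1$'s cost at $y$, so I must check that the optimum is still governed by the balance between $G_1$'s left agents and $G_2$. The ratio likely attains the bound only in the limit, possibly requiring a supremum argument, or letting $G_1$'s right agents sit exactly at $y^*$ while $G_2$ is placed further right.
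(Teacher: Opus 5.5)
\textbf{Upper bound.} Your upper-bound argument is correct and is essentially the paper's. You relocate the agents lying between $f(\bftheta)$ and $y^*$. You bound the gap by $w_g\bigl(R_g(f(\bftheta))-L_g(f(\bftheta))\bigr)(y^*-f(\bftheta))$. You then use that at least half of the largest group $G_\ell$ lies at or left of its median, so $\rho\le 1+2\frac{w_g|G_g|}{w_\ell|G_\ell|}$. You also treat $f(\bftheta)>y^*$ explicitly, with the correct sign $L_g(y^*)-R_g(y^*)$; the paper covers that case only by ``without loss of generality''.

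\textbf{Tightness: the instance fails.} The genuine gap is the tightness instance, which does not give the claimed ratio. You dropped the cost of the $k$ agents of $G_1$ at $1$. For $y\in[0,1]$,
\[
E_1(y)=w_{\min}\bigl((k+1)y+k(1-y)\bigr)=w_{\min}(k+y),\qquad E_2(y)=(2k+1)w_{\max}(1-y).
\]
So $E_1\ge w_{\min}k$ everywhere. The optimum is $\frac{(k+1)(2k+1)w_{\min}w_{\max}}{w_{\min}+(2k+1)w_{\max}}$, attained at a point tending to $1-\frac{w_{\min}}{2w_{\max}}$, not near $1$. The true ratio is therefore
\[
\frac{w_{\min}+(2k+1)w_{\max}}{(k+1)w_{\min}}.
\]
This is strictly below $1+2\frac{w_{\max}}{w_{\min}}$ for every $k$ and tends only to $2\frac{w_{\max}}{w_{\min}}$. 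For instance, with equal weights and $k=1$ it equals $2$, not $3$. Shrinking $G_2$ by one agent only lowers it further.

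\textbf{How to repair it.} For tightness, the bound $\mge(\bftheta,y^*)\ge w_\ell L_\ell(f(\bftheta))(y^*-f(\bftheta))$ must hold with equality. That forces the right half of the selected group to sit exactly at $y^*$, so it costs nothing there. The step $|G_\ell|\le 2L_\ell(f(\bftheta))$ must also be an equality, which needs an even-sized largest group together with the left-median tie-break. Your second suggestion achieves the first requirement. Take $k+1$ agents of $G_1$ at $0$, $k$ agents of $G_1$ at $a=\frac{(2k+1)w_{\max}}{(k+1)w_{\min}+(2k+1)w_{\max}}$, and $2k+1$ agents of $G_2$ at $1$. Then $y^*=a$ and the ratio is $1+\frac{2k+1}{k+1}\cdot\frac{w_{\max}}{w_{\min}}$, which approaches the bound as $k\to\infty$. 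The paper's instance attains the bound exactly. It has one $G_1$ agent (weight $w_{\min}$) at $0$, one $G_1$ agent at $\frac{2w_{\max}}{2w_{\max}+w_{\min}}$, and two $G_2$ agents (weight $w_{\max}$) at $1$. The size tie is broken toward $G_1$, whose left median is $0$. This gives $\mge=2w_{\max}$ against an optimum of $\frac{2w_{\max}w_{\min}}{2w_{\max}+w_{\min}}$.
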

\begin{proof}
Consider any arbitrary profile $\bftheta$. Let $f(\bftheta)$ be the output of the \MAJOR mechanism and $y^*$ be the optimal location. Without loss of generality, assume that $f(\bftheta) < y^*$ and $\mge(\bftheta, f(\bftheta))$ is achieved by group $G_g$. Recall \Cref{algo:major_group} and the group $G_\ell$ with largest size of agents. We first observe that there are at least $\frac{|G_\ell|}{2}$ agents in group $G_\ell$ on the left side of $f(\bftheta)$, which gives us  
\begin{align*}
    \mge(\bftheta, y^*) \ge w_\ell\cdot \frac{|G_\ell|}{2}\cdot (y^*-f(\bftheta)).
\end{align*}
Similar to the proof of \Cref{prop:med_total}, we have
\begin{align*}
 \mge(\bftheta, f(\bftheta))-\mge(\bftheta, y^*)
 &\le w_{g}(|R_{g}(f(\bftheta))|-|L_{g}(\bftheta)|)(y^*-f(\bftheta)).
\end{align*}
Finally, the approximation ratio is bounded by 
\begin{align*}
\rho & = \frac{\mge(\bftheta, f(\bftheta))}{\mge(\bftheta, y^*)} \\
& \le 1+\frac{w_{g}(|R_{g}(f(\bftheta))|-|L_{g}(\bftheta)|)(y^*-f(\bftheta))}{\mge(\bftheta, y^*)}\\
& \leq 1+\frac{w_g|G_g|(y^* - f(\bftheta))}{\mge(\bftheta, y^*)}\\
& \leq 1 + \frac{w_g \cdot |G_g|\cdot (y^* - f(\bftheta))}{w_\ell \cdot \frac{|G_\ell|}{2}\cdot (y^* - f(\bftheta))}\\
& \leq 1+2\cdot \frac{w_g}{w_\ell}\cdot \frac{|G_g|}{|G_\ell|} \\
& \leq 1+2\cdot \frac{w_{\max}}{w_{min}}.
\end{align*}
When considering $w_j=1$ or $w_j=\frac{1}{|G_j|}$, we have $w_{\max}=w_{\min}$ which implies $1+2\cdot \frac{w_{\max}}{w_{\min}}$ degenerates to an approximation ratio of $3$. We provide the following example that shows our analysis for the approximation ratio of the \MAJOR mechanism is tight. Consider an instance with one agent in group $G_1$ with $w_{\min}$ at $0$, one agent in group $G_1$ at $\frac{2\cdot w_{\max}}{2\cdot w_{\max}+w_{\min}}$, and two agents in group $G_2$ with $w_{\max}$ at $1$. The \MAJOR mechanism locates the facility at $0$, achieving the $\mge$ of $2\cdots w_{\max}$, however, locating the facility at $\frac{2w_{\max}}{2w_{\max}+w_{\min}}$ achieves the optimal $\mge$ value at $\frac{2w_{\max}\cdot w_{\min}}{2w_{\max}+w_{\min}}$. Hence, the approximation ratio of the \MAJOR mechanism is at least $1+2\frac{w_{\max}}{w_{min}}$.
\end{proof}

\subsection{Weighted Maximum Group Cost}
In this subsection, we consider maximum group effect defined by $E_j = w_j \cdot \max_{i \in G_j} c(f(\mathbf{\theta}), x_i)$. We first show that locating the facility at any agent location can achieve an approximation ratio of $1+\frac{w_{\max}}{w_{\min}}$.

\begin{lemma}\label{lem:compare_max}
    Given any agent profile $\bftheta$,
    Locating the facility at any agent $i$'s location $x_i$ achieves an approximation ratio of $1+\frac{w_{\max}}{w_{\min}}$.
\end{lemma}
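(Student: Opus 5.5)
The plan is to reuse the two-agent argument from the \MajorPhantom analysis, via \Cref{lem:two_agents}. We are in the $\WMGC$ setting with a single facility. Here $\mge(\bftheta,y)=\max_{t\in N} w_{g_t}\,|y-x_t|$, where $w_{g_t}$ is the largest weight among agent $t$'s groups. Fix a profile $\bftheta$, an agent $i$ and the placement $f(\bftheta)=x_i$, and let $y^*$ be an optimal location. Let agent $\ell$ attain the maximum, so $\mge(\bftheta,x_i)=w_{g_\ell}|x_\ell-x_i|$. If $x_\ell=x_i$ the ratio is trivially at most $1$, so assume $x_\ell\neq x_i$. By symmetry we may take $x_i<x_\ell$.

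\textbf{Step 1 (lower bound on the optimum).} Restricting the maximum in $\mge(\bftheta,y^*)$ to the two agents $i$ and $\ell$ can only decrease it. The two-agent optimum is given by \Cref{lem:two_agents}, which I will use in the form
\begin{align*}
\mge(\bftheta,y^*)\ \ge\ \min_{y\in\mathbb{R}}\max\{w_{g_i}|y-x_i|,\,w_{g_\ell}|y-x_\ell|\}\ =\ \frac{w_{g_i}w_{g_\ell}(x_\ell-x_i)}{w_{g_i}+w_{g_\ell}}.
\end{align*}
Here the equal-weight case of the lemma gives the same formula, namely $w(x_\ell-x_i)/2$. If one of the weights is zero, the ratio degenerates; I would assume $w_{\min}>0$ so that the bound $1+w_{\max}/w_{\min}$ is meaningful.

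\textbf{Step 2 (ratio).} Dividing, we get
\begin{align*}
\rho\ \le\ \frac{w_{g_\ell}(x_\ell-x_i)(w_{g_i}+w_{g_\ell})}{w_{g_i}w_{g_\ell}(x_\ell-x_i)}\ =\ 1+\frac{w_{g_\ell}}{w_{g_i}}\ \le\ 1+\frac{w_{\max}}{w_{\min}},
\end{align*}
since every $w_{g_t}$ lies in $[w_{\min},w_{\max}]$. This inequality uses only the two agents $i$ and $\ell$, so it holds whether $y^*$ lies between them or not. Hence no case split on the position of $y^*$ is needed, in contrast with the proof for \MajorPhantom.

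\textbf{Step 3 (tightness).} The main point needing care is whether the lemma claims the bound is exactly attained. To show tightness I would reuse the two-agent structure. Place agent $1$ with weight $w_{\min}$ at $0$ and agent $2$ with weight $w_{\max}$ at $1$, and take $i=1$. The choice $x_i=0$ gives $\mge=w_{\max}$. The optimum, by \Cref{lem:two_agents}, is $w_{\min}w_{\max}/(w_{\min}+w_{\max})$. The ratio is then exactly $1+w_{\max}/w_{\min}$. This shows that ``any agent'' cannot in general do better, and this bound in turn feeds the entries for \MED, \LEFTMOST and \MAJOR in \Cref{tab:compare}.
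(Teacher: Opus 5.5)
Your proof is correct, and it is cleaner than the paper's. The paper fixes the facility at $x_k$, lets agent $\ell$ attain the maximum, and assumes w.l.o.g.\ $x_k\le y^*$. It then splits into three cases according to where $x_\ell$ lies relative to $x_k$ and $y^*$. In the two cases where $y^*$ does not lie strictly between $x_k$ and $x_\ell$, it uses a single-agent lower bound on $\mge(\bftheta,y^*)$; only in the straddling case $x_k<y^*<x_\ell$ does it invoke \Cref{lem:two_agents}. You instead note that the two-agent minimax value $\frac{w_{g_i}w_{g_\ell}(x_\ell-x_i)}{w_{g_i}+w_{g_\ell}}$ lower-bounds $\mge(\bftheta,y)$ at every $y$, so no case split on $y^*$ is needed.

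What the paper's decomposition buys is finer information. In the non-straddling configurations the ratio is at most $w_{\max}/w_{\min}$, so only the straddling configuration can reach $1+w_{\max}/w_{\min}$, which points directly at the tight instance. However, as printed, the paper's cases (1) and (2) pair the wrong agent with the wrong distance. In case (1), the step $d(x_k,y^*)\ge d(x_\ell,x_k)$ fails when $y^*$ is close to $x_k$; the valid bound there is $w_{g_\ell}(y^*-x_\ell)\ge w_{g_\ell}(x_k-x_\ell)$. Symmetrically, case (2) should use agent $k$, via $w_{g_k}(y^*-x_k)\ge w_{g_k}(x_\ell-x_k)$. Your uniform argument avoids this issue entirely.

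Your tightness example in Step 3 is exactly the instance the paper uses later, in the proof of \Cref{prop:med_max}, rather than inside the lemma. It shows that the bound is attained for a particular choice of agent (here the leftmost). It does not show that every choice of agent attains it.
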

\begin{proof}
    Given any profile $\bftheta$, without loss of generality, assume that the facility is located at $x_k$, and $\mge(\bftheta, x_k)$ is incurred by agent $\ell$, i.e., $\mge(\bftheta, x_k) = w_{g_\ell} \cdot d(x_\ell, x_k)$\footnote{Recall that $w_{g_i}$ denotes the maximum weight among all he groups to which agent $i$ belongs, i.e., $w_{g_i}=\max_{g\in g_i}w_g$.}. Let $y^*$ be the optimal facility location. Without loss of generality, we assume that $x_k \leq y^*$. 
    
    \textbf{Case (1). If $x_\ell \le  x_k \leq y^*$},
     we have
     \begin{align*}
        \mge(\bftheta, y^*) \ge w_{g_k} \cdot d(x_k, y^*) \ge w_{g_k} \cdot d(x_\ell, x_k).
     \end{align*}
    So the approximation ratio can be upper-bounded by
    \begin{align*}
        \rho = \frac{\mge(\bftheta, x_k)}{\mge(\bftheta, y^*)} \le \frac{w_{g_\ell}\cdot d(x_\ell, x_k)}{w_{g_k} \cdot d(x_\ell, x_k)} \le \frac{w_{\max}}{w_{\min}}.
    \end{align*}

    \textbf{Case (2). If $x_k < x_\ell \le y^*$}, we have
    \begin{align*}
        \mge(\bftheta, y^*) \ge w_{g_\ell} \cdot d(x_\ell, y^*) \ge w_{g_\ell} \cdot d(x_\ell, x_k).
    \end{align*}
    which implies an optimum.

    \textbf{Case (3). If $x_k < y^* < x_\ell$}, from \Cref{lem:two_agents}, we derive that
    \begin{align*}
        \mge(\bftheta, y^*) & \ge \max_{k,\ell}\left\{ w_{g_k}\cdot d(x_k, y^*), w_{g_\ell}\cdot d(x_\ell, y^*)\right\} \\
        & \geq \frac{w_{g_k} \cdot w_{g_\ell} \cdot d(x_\ell,x_k)}{w_{g_k} + w_{g_\ell}}.
    \end{align*}
    Consequently, we further bound the approximation ratio by 
    \begin{align*}
        \rho = \frac{\mge(\bftheta, x_k)}{\mge(\bftheta, y^*)} \leq \frac{w_{g_\ell} \cdot d(x_\ell, x_k)}{\frac{w_{g_k} \cdot w_{g_\ell} \cdot d(x_\ell,x_k)}{w_{g_k} + w_{g_\ell}}}  \leq \frac{w_{g_k}+w_{g_\ell}}{w_{g_k}} \leq 1+\frac{w_{\max}}{w_{\min}}.
    \end{align*}
\end{proof}
\begin{proposition}\label{prop:med_max}
    The \MED, \LEFTMOST, and \MAJOR mechanisms have an approximation ratio of $1+\frac{w_{\max}}{w_{\min}}$ for minimizing $\mge$ when $E_j = w_j \cdot \max_{i \in G_j} c(f(\bftheta), x_i)$.
\end{proposition}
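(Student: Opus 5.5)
The plan has two halves: an upper bound read off from \Cref{lem:compare_max}, and a matching lower-bound instance for each mechanism.

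\textbf{Upper bound.} The \MED mechanism outputs $x_{\lceil n/2\rceil}$, the \LEFTMOST mechanism outputs $x_1$, and the \MAJOR mechanism outputs the (leftmost) median of the locations of the agents in $G_\ell$. Each of these is the reported location of some agent. \Cref{lem:compare_max} says that placing the facility at any agent's location has approximation ratio at most $1+\frac{w_{\max}}{w_{\min}}$ under the $\WMGC$ objective. This gives the upper bound for all three mechanisms at once. The only point to check is that the \MAJOR tie-breaking ("select the leftmost") picks an actual agent location rather than a midpoint, and it does.

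\textbf{Lower bound, general pattern.} For each mechanism I will build an instance modeled on the tightness instances of \Cref{thm:endpoint_wmgc_upper_bound}. The mechanism should put the facility at $0$, where an agent of weight $w_{\min}$ sits. A heavy agent of weight $w_{\max}$ should sit at $1$. Then the mechanism's $\mge$ is $w_{\max}$. By \Cref{lem:two_agents}, the optimum is at $y^*=\frac{w_{\max}}{w_{\min}+w_{\max}}$ with value $\frac{w_{\min}w_{\max}}{w_{\min}+w_{\max}}$, which gives ratio $1+\frac{w_{\max}}{w_{\min}}$. Any extra agents I add must lie near $y^*$ so that they do not raise the optimum.

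\textbf{Lower bound, per mechanism.}
\begin{itemize}
\item For \LEFTMOST, two agents suffice. Put $G_1=\{1\}$ at $0$ with weight $w_{\min}$ and $G_2=\{2\}$ at $1$ with weight $w_{\max}$.
\item For \MED with $n=2$, the median is $x_1=0$, so the same instance works. For larger $n$, place enough weight-$w_{\min}$ agents (each in its own light group) at $0$ so that the median lands at $0$. Put one $w_{\max}$ agent at $1$, and any remaining agents at $y^*$.
\item For \MAJOR, I need the largest-size group to have its median at $0$. Let $G_1$ be the largest group, with weight $w_{\min}$, consisting of one agent at $0$ and one agent at $y^*$; its leftmost median is $0$. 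Let $G_2=\{\text{agent at }1\}$ with weight $w_{\max}$. Then the facility goes to $0$, and the agent at $1$ incurs $w_{\max}$. The optimum at $y^*$ is unaffected: $G_1$'s maximum cost there is $y^*$, weighted by $w_{\min}$, and $G_2$'s is $(1-y^*)$, weighted by $w_{\max}$; both equal the two-agent optimum.
\end{itemize}

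\textbf{Expected main obstacle.} The hard part is the bookkeeping with overlapping weights. The $\WMGC$ cost of each agent uses $w_{g_i}$, the maximum weight over its groups. So every agent I add must belong only to groups whose weight does not push its weighted distance to $y^*$ above the two-agent optimum. Keeping each agent in a single group, with the filler agents located exactly at $y^*$, settles this.
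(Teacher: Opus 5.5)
Your proposal is correct and follows the paper's proof. The upper bound comes from \Cref{lem:compare_max}, since all three mechanisms output an agent's location. Tightness comes from the instance with a weight-$w_{\min}$ agent at $0$ and a weight-$w_{\max}$ agent at $1$.

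The one difference is that the paper uses this single two-agent instance for all three mechanisms. For \MAJOR, the two size-one groups tie, and the smallest-index rule picks $G_1$, whose median is $0$. Your separate three-agent \MAJOR instance and your larger-$n$ \MED instances are therefore valid but not needed.
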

\begin{proof}
    By \Cref{lem:compare_max}, we immediately have that the \MED, \LEFTMOST, and \MAJOR mechanisms have an approximation ratio of at most $1+\frac{w_{\max}}{w_{\min}}$. To show the tightness, we then show that the approximation ratio is at least $1+\frac{w_{\max}}{w_{\min}}$. Consider an example with one agent in group $G_1$ with $w_1=w_{\min}$ at $0$ and one agent in group $G_2$ with $w_2=w_{\max}$ at $1$. For any of these three mechanisms, locating the facility at $0$ achieves the $\mge$ objective of $w_{\max}$. For the optimal location $\frac{w_{\max}}{w_{\max} + w_{\min}}$, it achieves the $\mge$ objective of $\frac{w_{\max}\cdot w_{\min}}{w_{\max} + w_{\min}}$. 
\end{proof}

\section{Omitted Proofs from Section 3}
\subsection{Omitted Proof of \Cref{prop:balance_mechanism_mid_point_and_left_most}}
\begin{proof}
    For $m=1$ and equal group weight, $\mge$ degenerates to the classical social cost objective. In this case, \BALANCED mechanism places the facility at $\min \{y : L_1(y)\geq R_1(y)\}$, which is the leftmost location $y$ such that the number of agents to the left of $y$ is at least as large as the number of agents to the right. This is precisely the definition of median point among the $n$ agents\footnote{Breaking ties by selecting the left median when $n$ is even.}. For the maximum cost objective (i.e., $m=n$ and equal weights), the leftmost agent location will be selected by \BALANCED mechanism since we have $L_1(x_1)=1$ and $\max_{j\in [m]}R_j(x_1)=1$. 
\end{proof}

\subsection{Omitted Proof of \Cref{lem:two_agents}}
\begin{proof}
Given any two-agent instance profile $\bftheta$, we first observe that $y^*\in [x_1, x_2]$.

\textbf{Case 1: $g_1=g_2$, i.e., two agents with the same group membership}. Given any facility location $y$, \(\mge(\bftheta, y) = w_1 \cdot \max \{ |y - x_1|, |y - x_2| \}\). For \(x_1 < x_2\), the function \(\max \{ |y - x_1|, |y - x_2| \}\) is piecewise linear and convex, minimized where \(|y - x_1| = |y - x_2|\). By solving $y - x_1 = x_2 - y$, we have $y^* = \frac{x_1 + x_2}{2}$. When $y^* = \frac{x_1 + x_2}{2}$, the maximum group effect under $y^\ast$ is 
\begin{align*}
    \mge(\bftheta, y^*) = w_1 \cdot \frac{x_2 - x_1}{2}.
\end{align*}

\textbf{Case 2: $g_1\neq g_2$, i.e., two agents with different group membership}. Given any facility location $y$, \(\mge(\bftheta, y) = \max \{ w_1 |y - x_1|, w_2 |y - x_2| \}\). For \(x_1 < x_2\), minimize the convex function \(\max \{ w_1 |y - x_1|, w_2 |y - x_2| \}\). The minimum occurs where $w_1 |y - x_1| = w_2 |y - x_2|$. That is 
\begin{align*}
    w_1 (y - x_1) = w_2 (x_2 - y) \implies  y^* = \frac{w_2 x_2 + w_1 x_1}{w_1 + w_2}.
\end{align*}
At $y^*$:
\begin{align*}
    w_1 |y^* - x_1| = w_1 \left( \frac{w_2 x_2 + w_1 x_1}{w_1 + w_2} - x_1 \right) = \frac{w_1 w_2 (x_2 - x_1)}{w_1 + w_2},
\end{align*}
and 
\begin{align*}
    w_2 |y^* - x_2| = w_2 \left( x_2 - \frac{w_2 x_2 + w_1 x_1}{w_1 + w_2} \right) = \frac{w_1 w_2 (x_2 - x_1)}{w_1 + w_2}.
\end{align*}
Therefore, we have $\mge(\bftheta, y^*) = \frac{w_1 w_2 (x_2 - x_1)}{w_1 + w_2}$. This completes the proof. 
\end{proof} 

\section{Omitted Proofs from Section 4}
\subsection{Omitted Proof of \Cref{prop:lower_bound_wtgc}}
Observe that the objective $\mge$, where $E_j=w_j\cdot \sum_{i\in G_j} c(f(\bftheta,x_i))$, strictly generalizes the classical social cost objective as a special case. We recall the characterization of deterministic strategyproof mechanisms for the case $k=2$ due to \citet{FotakisT14}. 
Specifically, when the number of the agents $n\geq 5$, a deterministic strategyproof mechanism $f$ admits a bounded approximation ratio with respect to the social cost objective if and only if $f$ is either the \ENDPOINT mechanism or a \DICTATORIAL mechanism. For the \ENDPOINT mechanism, the approximation ratio is known to be $1+(n-2)\cdot \frac{w_{\max}}{w_{\min}}$. We next analyze the \DICTATORIAL mechanism. Let $j$ be the dictator agent, so that one facility is placed at the dictator's reported location $x_j$. For the location of the other facility, the \DICTATORIAL mechanism first considers the distance of $x_j$ to the leftmost and to the rightmost location, $d_l=|\min \bfx -x_j|$ and $d_r=|\max \bfx - x_j|$, respectively. The second facility is then placed at $x_j - \max\{d_l, 2d_r\}$ if $d_l > d_r$, and $x_j + \max\{d_r,2d_l\}$, otherwise. Consider the following instance with $n$ agents. The dictator agent $j$ is located at position $0$. There are $n-2$ agents located at position $\frac{1}{2}$, and one additional agent located at position $1$. The agents are partitioned into two groups as follows: $G_1=\{j\}$ with weight $w_1=w_{\min}$, and $G_2=\{1,2,\cdots, j-1, j+1,\cdots, n\}$ with weight $w_2=w_{\max}$. Under this instance, the \DICTATORIAL mechanism outputs the facility locations $f(\bftheta)=\{x_j=0,1\}$. This outcome coincides with that of the \ENDPOINT mechanism on the same instance. Consequently, the approximation ratio attained by the \DICTATORIAL mechanism in this case is also $1+(n-2)\cdot \frac{w_{\max}}{w_{\min}}$.

\subsection{Omitted Proof of \Cref{prop:lower_bound_wmgc}}
Observe that the objective $\mge$, where $E_j=w_j\cdot \max_{i\in G_j} c(f(\bftheta,x_i))$, strictly generalizes the classical maximum cost objective as a special case. We show that for $k \ge 2$ cases, every deterministic strategyproof mechanism with a bounded approximation ratio for minimizing the maximum cost have a bounded approximation ratio for minimizing the social cost. To see this, consider a deterministic strategyproof mechanism $f$ with an unbounded approximation ratio for minimizing the social cost. It implies that there exists an instance that the optimal social cost is $0$ and the social cost achieved by $f$ is larger than $0$. It further implies that the optimal maximum cost is $0$ and the maximum cost achieved by $f$ is larger than $0$, under the same instance. Hence, it is sufficient to consider all deterministic strategyproof mechanisms with bounded approximation ratios for minimizing the social cost, which are characterized by \citet{FotakisT14}. 
Specifically, when the number of the agents $n\geq 5$, a deterministic strategyproof mechanism $f$ admits a bounded approximation ratio with respect to the social cost objective if and only if $f$ is either the \ENDPOINT mechanism or a \DICTATORIAL mechanism. Therefore, for the maximum cost objective, we only focus on the \ENDPOINT mechanism and the \DICTATORIAL mechanism. As we have shown in \Cref{thm:endpoint_wmgc_upper_bound}, the \ENDPOINT mechanism achieves a tight approximation ratio of $1+\frac{w_{\max}}{w_{\min}}$ for the $\WMGC$ objective. To complete the proof of the existence lower bound, it suffices to show that the approximation ratio of the \DICTATORIAL mechanism is at least $1+\frac{w_{\max}}{w_{\min}}$. To see this, consider $j$ is the dictator, and the instance is as follows. $x_1=x_2=\cdots =x_{n-2}=0, x_{n-1}=\frac{1}{2}$, and $x_j=1$. The group structure is given by $G_1=\{1\}$, and $G_2=\{2,3,\cdots, n-1, j\}$, with weights $w_1=w_{\min}$ and $w_2=w_{\max}$. By the definition of the \DICTATORIAL mechanism, one facility is placed at $1$ and the other facility is placed at $0$. The outcome aligns with the \ENDPOINT mechanism and the instance is, in fact, the exact instance which shows the tight bound of $1+\frac{w_{\max}}{w_{\min}}$ for the mechanism, hence, it implies that the \DICTATORIAL mechanism has an approximation ratio of at least $1+\frac{w_{\max}}{w_{\min}}$. This completes the proof.
\end{document}